
\documentclass[10pt]{article}


\usepackage{amssymb,amsmath, amsthm, setspace, color, graphics}
\usepackage{dsfont}              
\usepackage[dvips]{graphicx}
\usepackage[english]{babel}
\usepackage{float}
\usepackage{caption}
\usepackage{subcaption}
\usepackage[authoryear]{natbib}
\usepackage[pdfborder={0 0 0},final=true,colorlinks=true,linkcolor=magenta,citecolor=blue]{hyperref}
\setlength{\parindent}{5mm}       
\numberwithin{equation}{section}
\newcommand{\tab}{\hspace*{2em}}              
\theoremstyle{plain}
\newtheorem{thm}{Theorem}[section]
\newtheorem{cor}[thm]{Corollary}
\newtheorem{lem}[thm]{Lemma}
\newtheorem{prop}[thm]{Proposition}
\theoremstyle{plain}
\newtheorem{assumption}[thm]{Assumption}      
\theoremstyle{definition}

\theoremstyle{definition}

\theoremstyle{remark}
\newtheorem{rem}{Remark}[section]
\begin{document}
\singlespace

\title{\vspace{-6ex} Active Portfolio Management, Positive Jensen-Jarrow Alpha, and Zero Sets of CAPM}
\author{ Godfrey Cadogan \thanks{ Corresponding address: Information Technology in Finance, Institute for Innovation and Technology Management, Ted Rogers School of Management, Ryerson University, 575 Bay, Toronto, ON M5G 2C5; e-mail: \textcolor[rgb]{0.00,0.00,1.00}{\href{mailto:godfrey.cadogan@ryerson.ca}{godfrey.cadogan@ryerson.ca}}.  This paper benefited from conversations with John A. Cole. Research support from the Institute for Innovation and Technology Management is gratefully acknowledged. All errors which may remain are my own.} \\Working Paper\\ Comments welcome}
\renewcommand\thefootnote{\fnsymbol{footnote}}
\date{\today\vspace{-4ex}}
\maketitle
\renewcommand\thefootnote{\arabic{footnote}}
\begin{abstract}
We present conditions under which positive alpha exists in the realm of active portfolio management-- in contrast to the controversial result in \cite[pg.~20]{Jarrow2010} which implicates delegated portfolio management by surmising that positive alphas are illusionary.  Specifically, we show that the critical assumption used in \cite[pg.~20]{Jarrow2010}, to derive the illusionary alpha result, is based on a zero set for CAPM with Lebesgue measure zero. So conclusions based on that assumption may well have probability measure zero of occurrence as well. Technically, the existence of [Tanaka] local time on a zero set for CAPM implies existence of positive alphas. In fact, we show that positive alpha exists under the same scenarios of ``perpetual event swap" and ``market systemic event" \cite{Jarrow2010} used to formulate the illusionary positive alpha result. First, we prove that as long as asset price volatility is greater than zero, systemic events like market crash will occur in finite time almost surely. Thus creating an opportunity to hedge against that event. Second, we find that Jarrow's ``false positive alpha" variable constitutes  portfolio manager reward for trading strategy. For instance, we show that positive alpha exists if portfolio managers develop hedging strategies based on either (1) an exotic [barrier] option on the underlying asset--with barrier hitting time motivated by the ``market systemic" event, or (2) a \emph{swaption} strategy for the implied interest rate risk inherent in Jarrow's triumvirate of riskless rate of return, factor sensitivity exposure, and constant risk premium for a perpetual event swap.
\\
\\
\emph{Keywords:} empirical alpha process, portfolio hedging strategies,  active portfolio management, market systemic risk, swaption, delegated portfolio management, local time of alpha
\\
\\
\emph{JEL Classification Codes:} C02, G12, G13\\
\emph{Mathematics Subject Classification 2010:} 91G10, 91G70
\end{abstract}
\singlespace
\renewcommand\thefootnote{\fnsymbol{footnote}}
\renewcommand\thefootnote{\arabic{footnote}}
\tableofcontents
\listoffigures
\newpage
\section{Introduction}\label{sec:Intro}
Portfolio performance evaluation is one of the most actively researched areas in financial economics. Typically, such evaluation is based on the statistical properties of the intercept term--alpha--in a linear asset pricing model that subsumes the capital asset pricing model (CAPM)\footnote{See \cite{Sharpe1964}.}. There, assuming zero transaction costs, alpha is construed as portfolio returns net of risk adjusted returns measured relative to a standard index like the S\&P 500 or other mutually agreed benchmark\footnote{See \cite[pg.~88]{GrinoldKahn2000}.}. Positive alpha implies that a portfolio manager's investment decisions increased the returns on a portfolio while the converse is true for negative alpha.  However, empirical tests of portfolio performance evaluation tend to produce mixed results\footnote{See e.g., \cite[pp.~2-3]{FersonLin2010} for a recent survey of the literature.}. The goal of this paper is to provide an explanation for the existence of positive alpha by and through a continuous trading strategy representation theorem introduced in \cite{Cadogan2011b}. Contrasted against a controversial result in \cite{Jarrow2010} who surmised that positive alphas are illusionary based on a continuous asset pricing model constructed in \cite{JarrowProtter2010}.

\tab To be sure, the genesis of the positive alpha controversy stems from seminal papers by \cite{Jensen1967} and \cite{TreynorMazuy1966} who proposed performance evaluation methods motivated by the single factor CAPM. There, Jensen argued that a statistically significant positive intercept (alpha) in a single factor CAPM supports superior portfolio manager performance. However, he was unable to find statistically significant evidence of positive alpha in monthly returns for his sample of mutual funds.  Treynor and Mazuy augmented the single factor CAPM with a quadratic market premium term to identify convexity in portfolio returns as a measure of superior performance due to market timing. Those methods were subsequently extended in influential papers by \cite{Merton1981} and \cite{HenrikMerton1981}, and tested empirically in \cite{Henricksson1984} and \cite{GlostenJagannathan1994}. Those authors used isomorphism between the returns on an option on the underlying asset, in lieu of the Treynor-Mazuy quadratic term, to identify a portfolio manager's market timing skills. It should be noted in passing that market timing or stock picking involves trade strategy.

\tab The advent of \cite{Ross1976} arbitrage pricing theory, \cite{Merton1973} intertemporal CAPM (``ICAPM") and its consumption based (CCAPM) extension by \cite{Breeden1979} provide theoretical foundations for extending portfolio performance evaluation to multifactor asset pricing models. Arguably, the most prominent empirical multifactor linear asset pricing model referenced in the literature is \cite{FamaFrench1993} 3-factor hedge model\footnote{Other popular hedge factor models include \cite{EltonGruberBlake1996} (4-factor model includes benchmark plus SMB, growth minus value stock returns (GMV) and bond index return (BI);
\cite{Cahart1997} ( 4-factor model includes benchmark plus SMB, HML, momentum factor (MOM)); \cite{AgrawalNaik2004} (6-factor model includes benchmark plus SMB, HML, MOM, out-of-the-money put options (OTM Put), at-the-money-put options(ATM Put)). See \cite{Nohel2010} and \cite{Goyal2012} for surveys of literature.}. Statistically significant hedge factors suggests that they are proxies for alpha relative to a given benchmark, and they portend trade strategy for alpha. Moreover, empirical analysis of high frequency data (i.e., daily and intraday data) typically generated by active portfolio management, has been more encouraging about portfolio manager market timing ability\footnote{See \cite{ChanceHemler2001, BollenBusse2001}}, and suggests that sample frequency may be the cause of earlier studies failure to find such relationships.

\tab Recent contributions by \cite{Jarrow2010}, \cite{JarrowProtter2010}, and \cite{Cadogan2011b} introduced empirical alpha processes\footnote{Technically, the Jarrow-Protter alpha is not an empirical process since it was an ad hoc introduction into their model. By contrast, the empirical alpha processes in \cite[Theorem~4.6]{Cadogan2011b} is derived from an asymptotic theory of canonical multifactor linear asset pricing models. See e.g., \cite[pg.~1]{ShorackWellner1986} for definition of empirical process.}, for a multifactor linear asset pricing model, as a means of evaluating alpha in active portfolio management in continuous time. However, there is some controversy involving the empirical processes of those authors because they appear to disagree on whether positive alphas can be identified.

\tab In particular, \cite{Jarrow2010, JarrowProtter2010} introduced a heuristic alpha pricing model, which they coigned a ``K-factor return model for an asset's evolution". To support their argument that positive alphas are illusionary, \cite[pg.~20]{Jarrow2010} introduced an \emph{event swap} to identify the impact of a so called ''phantom factor" on portfolio alpha. Additionally, he added an extraneous alpha variable to the model and then imposed identifying restrictions to ascertain its efficacy. Specifically, he assumed the possibility of a ``market-systemic" event like a stock market crash, and a financial contract with notional value \$ 1 for which the buyer pays a fixed spread $c$ above the risk free rate $r_t$ for protection $I$ from the seller in the event the ''market-systemic" event occurs. In other words, $c$ is a risk premium. This event is incorporated in the model under the notion of a long position in the ``K-th factor" $X_K(t)$ which is functionally equivalent to a \emph{perpetual event sawp} on the market-systemic event. Jarrow's  result implicates delegated portfolio management. It suggests that portfolio managers (and hence investors) are not compensated for bearing the risk of unfavorable shifts in the investment opportunity set \`{a} la \cite[pg.~882,~eq.~(34)]{Merton1973}. It also implicates \cite[pg.~883]{Merton1973} heuristics about the existence of synthetic bonds in market disequilibrium or existence of market neutral portfolios. On a more technical level, it implicates the class of controller-stopper problems introduced by \cite{KaratzasSudderth1999} who provide regularity conditions for optimal control of diffusions to a goal--such as those in contention here.

\tab By contrast, \cite{Cadogan2011b} introduced an adaptive behavioral model of portfolio manager market timing, in the context of high frequency trading, by applying asymptotic theory to a canonical multifactor linear asset pricing model, to derive a continuous time trade strategy representation theorem which generates portfolio alphas\footnote{Arguably, trade strategy representation can be included in incentive contracts. See e.g., \cite{DybvigFarnsworthCarpenter2010}.}. In particular, he used regularity conditions of linear asset pricing models to show that, for single factor models like the CAPM, so called Jensen's alpha behaves like a Brownian bridge. A result consistent with one of the empirical regularities of high frequency trading where portfolio managers make extensive use of futures markets, and start the trading day with a zero inventory position, and close out inventories at the end of the trading period\footnote{See \cite[pg.~120]{EasleyPradoOHara2011}; \cite[pg.~2]{Brogaard2010}(Using a net inventory of zero at the end of the trading day ''I estimate HFTs generate around \$2.8 billion in gross annual trading proits and on a per \$100 traded earn three-fourths of a penny.  The per-dollar traded pro?t is about one-seventh that of traditional market makers''). Compare \cite[pp.~27-28]{Hull2006}(daily settlement at end of trading day in futures markets). }.

\tab In summary, this paper compares and contrasts \cite{Jarrow2010} and \cite{JarrowProtter2010} with \cite{Cadogan2011b} in the context of the \emph{event swap} and market systemic event environment posited in \cite{Jarrow2010}. And it provides explanations for Jarrow-Protter seeming misperception(s) about the existence of positive alpha, and proceeds as follows. In \autoref{sec:Model} we introduce the empirical alpha processes in contention, and establish conditions for stochastic equivalence between them in Proposition \autoref{prop:StochasticEquivalenceModels}. In \autoref{sec:MarketSystemicAlpha} we provide analytics for positive portfolio alpha including but not limited to its local time, and the existence of nonzero Jarrow alpha due to financial engineering or otherwise. Te main result there is Proposition \autoref{prop:NonZeroJarrowAlpha}. In \autoref{sec:Conclusion} we conclude the paper.

\section{The Models}\label{sec:Model}

We begin by stating the alpha representation theorems of \cite{Cadogan2011b,JarrowProtter2010} [without proofs] in seriatim on the basis of the assumptions in \cite{Jarrow2010}. Then we show that the two theorems are stochastically equivalent--at least for a single factor--and provide some analytics.
\begin{assumption}
    Asset markets are competitive and frictionless with continuous trading of a finite number of assets.
\end{assumption}
\begin{assumption}
   Asset prices are adapted to a filtration of background driving Brownian motion.
\end{assumption}
\begin{assumption}
   Prices are ex-dividend.
\end{assumption}
\begin{thm}[Trading strategy representation. \cite{Cadogan2011b}]\label{thm:GammaRepresentationTheorem}~\\
   Let $(\Omega,\mathcal{F}_t,\mathbb{F},P)$ be a filtered probability space, and $Z=\{Z_s,\mathcal{F}_s;\; 0\leq s < \infty\}$ be a hedge factor matrix process on the augmented filtration $\mathbb{F}$. Furthermore, let $a^{(i,k)}(Z_s)$ be the $(i,k)$-th element in the expansion of the transformation matrix $(Z_s^TZ_s)^{-1}Z^T_s$, and $B=\{B(s),\mathcal{F}_s;\;s\geq 0\}$ be Brownian motion adapted to $\mathbb{F}$ such that $B(0)=x$. Assuming that $B$ is the background driving Brownian motion for high frequency trading, the hedge factor sensitivity process, i.e. trading strategy, $\gamma = \{\gamma_s,\mathcal{F}_s;0\leq s < \infty\}$ generated by portfolio manager market timing for Brownian motion starting at the point $x\geq 0$ has representation
   \begin{equation*}
      d\gamma^{(i)}(t,\omega) = \sum^j_{k=1}a^{(i,k)}(Z_t)\biggl[\frac{x}{1-t}\biggr]dt  - \sum^j_{k=1}a^{(i,k)}(Z_t)dB(t,\omega),\;x\geq 0
   \end{equation*}
   for the $i$-th hedge factor $i=1,\dotsc,p$, and $0\leq t\leq 1$. \hfill $\Box$
\end{thm}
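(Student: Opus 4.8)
The plan is to recognize the trade strategy $\gamma^{(i)}$ as the ordinary least squares projection of asset returns onto the hedge factors and then to differentiate that projection once the driving process has been identified as a Brownian bridge. Concretely, for the canonical multifactor linear asset pricing model underlying \cite{Cadogan2011b}, the factor sensitivity vector is the projection estimator $\gamma_t = (Z_t^T Z_t)^{-1} Z_t^T R_t$, where $R_t$ is the (ex-dividend) return vector; reading off the $(i,k)$ entries of the transformation matrix gives the componentwise identity $\gamma^{(i)}(t) = \sum^j_{k=1} a^{(i,k)}(Z_t) R^{(k)}(t)$. Everything then reduces to (i) characterizing the law of the driving return innovations in the high-frequency continuous-time limit, and (ii) applying It\^o's rule to this bilinear expression.

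The first step is the substantive one. I would invoke the asymptotic theory of the estimated loading process (the empirical $\alpha$-process of \cite{Cadogan2011b}) to argue that, after rescaling intraday clock time to $[0,1]$, the normalized return/residual innovation converges weakly to a Brownian bridge pinned by the boundary data of high-frequency trading --- a zero net inventory position at both the open and the close of the trading day. With $B$ the background driving Brownian motion and $B(0)=x\ge 0$, this pins the innovation to a bridge whose drift carries the initial datum $x$ and the characteristic blow-up factor $1/(1-t)$ as $t\uparrow 1$; it is precisely this pinning drift that produces the $\tfrac{x}{1-t}\,dt$ term and explains why single-factor Jensen's $\alpha$ ``behaves like a Brownian bridge.'' The diffusion coefficient of the bridge is inherited directly from $dB$.

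It remains to push the bridge dynamics through the linear map. Differentiating $\gamma^{(i)}(t)=\sum^j_{k=1} a^{(i,k)}(Z_t) R^{(k)}(t)$ by the It\^o product rule and substituting the bridge dynamics for $dR^{(k)}$ collects a drift term $\sum^j_{k=1} a^{(i,k)}(Z_t)\,\tfrac{x}{1-t}\,dt$ and a martingale term $-\sum^j_{k=1} a^{(i,k)}(Z_t)\,dB(t)$, which is the claimed representation. For this to hold cleanly one needs the cross-variation and It\^o-correction terms $d\langle a^{(i,k)}(Z_\cdot), R^{(k)}\rangle_t$ to vanish, which is the case when the hedge factor matrix process $Z$ is of finite variation (or, more generally, when these corrections are absorbed into the reported drift); I would first establish the requisite regularity --- almost sure invertibility of $Z_t^T Z_t$ and continuity of the map $Z \mapsto (Z^TZ)^{-1}Z^T$ on the relevant path space --- before invoking the product rule. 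The hardest part is step (i): rigorously upgrading the discrete OLS loadings to a continuous-time Brownian-bridge limit and justifying the exact pinning drift $\tfrac{x}{1-t}$, since the matrix-inverse nonlinearity in $(Z_t^T Z_t)^{-1}$ makes both the weak-convergence argument and the control of the It\^o corrections delicate.
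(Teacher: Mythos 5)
The paper never proves this theorem: its ``proof'' is the single line ``See \cite[Thm.~4.6]{Cadogan2011b},'' so the result is imported wholesale from an external reference and there is no internal argument to compare yours against. The only question is whether your sketch stands on its own as a proof, and it does not, for two concrete reasons. First, the entire analytic content of the statement --- that the driving innovation, rescaled to the trading day $[0,1]$, is a bridge-type process with drift exactly $\tfrac{x}{1-t}$ --- is invoked (``I would invoke the asymptotic theory \dots'') rather than established, and you concede it is the hardest step. Note also that $\tfrac{x}{1-t}$ is a \emph{deterministic} drift, not the state-dependent drift $-(Y_t-b)/(1-t)$ of a Brownian bridge pinned at $b$; so ``zero inventory at open and close forces a bridge'' does not by itself yield this particular drift. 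What is actually needed is a Girsanov-type change of measure (which the paper sketches only \emph{after} the theorem, in Section \ref{subsec:TradestrategyAlpha}: under $Q$ the process $\hat B(t)=B(t)-\int_0^t \tfrac{x}{1-s}\,ds$ is Brownian motion), and deriving that drift from the discrete OLS asymptotics is precisely the missing proof.

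Second, your It\^o product-rule step has an unhandled term. Differentiating $\gamma^{(i)}(t)=\sum_k a^{(i,k)}(Z_t)R^{(k)}(t)$ produces three contributions:
\begin{equation*}
   d\gamma^{(i)}(t)=\sum_k a^{(i,k)}(Z_t)\,dR^{(k)}(t)+\sum_k R^{(k)}(t)\,da^{(i,k)}(Z_t)+\sum_k d\bigl\langle a^{(i,k)}(Z_\cdot),R^{(k)}\bigr\rangle_t .
\end{equation*}
Assuming $Z$ has finite variation kills only the third (covariation) term; the second term $\sum_k R^{(k)}\,da^{(i,k)}(Z_t)$ survives whenever $Z_t$ moves at all, and it does not appear in the claimed representation. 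To get the clean formula you must effectively freeze $Z$ --- i.e., work conditionally on the hedge factor matrix, which is how the paper uses the result (its cumulative alpha is written $d\{A(t)|Z\}=Z\,d\pmb{\gamma}(t)$) --- and that hypothesis should be stated, not absorbed silently into ``the reported drift.'' As it stands, the proposal is a plausible reconstruction of the cited paper's strategy (OLS loadings plus a functional limit theorem), but both of its pillars --- the bridge limit and the differentiation step --- are gaps, the first assumed and the second incorrect without an additional constancy assumption on $Z$.
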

\begin{proof}
   See \cite[Thm.~4.6]{Cadogan2011b}.
\end{proof}
\cite{Cadogan2011b} computes an alpha vector $\pmb{\alpha}(t,\omega)=Z\pmb{\gamma}(t,\omega)$ where $Z$ is a matrix of hedge factors and $\pmb{\gamma}(t,\omega)$ is a vector of hedge factor sensitivity that constitutes the active portfolio manager's trading strategy. This parametrization is consistent with hedge factor models\footnote{See \cite{Nohel2010} for a review.} such as \cite{FamaFrench1993}.
\begin{thm}[(\cite{JarrowProtter2010})]\label{thm:JarrowProtterAlpha}~\\
    Given no arbitrage, there exist $K$ portfolio price processes $\{X_1(t),\dotsc,X_K(t)\}$ such that an arbitrary security's excess return with respect to the default free spot interest rate, i.e. risk free rate, $r_t$ can be written
    \begin{equation}
       \frac{dS_i(t)}{S_i(t)} - r_tdt = \sum^K_{k=1}\beta_{ik}(t)\biggl(\frac{dX_i(t)}{X_i(t)}-r_tdt \biggr) +\delta_i(t)d\epsilon_i(t)
    \end{equation}
    where $\epsilon_i(t)$ is a Brownian motion under $(P,\mathcal{F}_t)$ independent of $X_k(t)$ for $k=1,\dotsc,K$ and $\delta_i(t)=\sum^d_{k=K+1}\sigma^2_{ik}(t)$ \hfill $\Box$
\end{thm}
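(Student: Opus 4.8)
The plan is to read the representation off the fundamental theorem of asset pricing together with the martingale representation property supplied by Assumption~2, and then to isolate the idiosyncratic term by an orthogonal projection of volatility loadings. First I would invoke no arbitrage: under no free lunch with vanishing risk the fundamental theorem guarantees an equivalent (local) martingale measure $Q\sim P$ under which each discounted price $e^{-\int_0^t r_u\,du}S_i(t)$ is a local martingale. Because the filtration $\mathbb{F}$ is generated by a $d$-dimensional background Brownian motion $W=(W_1,\dots,W_d)$, the martingale representation theorem lets me write every traded return as an It\^o process
\[
   \frac{dS_i(t)}{S_i(t)} = \mu_i(t)\,dt + \sum_{k=1}^{d}\sigma_{ik}(t)\,dW_k(t),
\]
and the martingale condition under $Q$ forces the excess drift to satisfy $\mu_i(t)-r_t = \sum_{k=1}^{d}\sigma_{ik}(t)\lambda_k(t)$ for a common predictable market price of risk $\lambda$. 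The same representation applies to each candidate factor price $X_k(t)$.

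Second, I would construct the $K$ factor portfolios. The idea is to select $X_1,\dots,X_K$ so that their volatility matrix has full rank on the first $K$ Brownian coordinates and vanishing loading on the remaining $d-K$; equivalently, a predictable orthonormal change of basis for $W$ separates priced systematic risk ($W_1,\dots,W_K$) from residual risk ($W_{K+1},\dots,W_d$). Projecting the loading vector $\sigma_i=(\sigma_{i1},\dots,\sigma_{id})$ onto the span of the factor loadings then produces the predictable coefficients $\beta_{ik}(t)$ for which $\sum_{k=1}^{K}\beta_{ik}(t)\bigl(\tfrac{dX_k}{X_k}-r_t\,dt\bigr)$ reproduces both the systematic volatility and, through the market-price-of-risk relation above, the systematic part of the excess drift.

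Third, I would bundle the leftover risk into a single scalar noise. Define $\delta_i(t)$ via $\delta_i(t)^2=\sum_{k=K+1}^{d}\sigma_{ik}(t)^2$ (matching the residual sum in the statement) and set $d\epsilon_i(t)=\delta_i(t)^{-1}\sum_{k=K+1}^{d}\sigma_{ik}(t)\,dW_k(t)$ on $\{\delta_i>0\}$. By construction $\epsilon_i$ is a continuous local martingale with $\langle\epsilon_i\rangle_t=t$, so L\'evy's characterization makes it a Brownian motion; and since it is driven only by $W_{K+1},\dots,W_d$ while the factors load only on $W_1,\dots,W_K$, orthogonality of distinct Brownian coordinates gives $\langle\epsilon_i,X_k\rangle\equiv 0$ and hence the required independence under $(P,\mathcal{F}_t)$.

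The hard part will be the second step: the factors $X_k$ are not handed to us, so I must exhibit them as genuine no-arbitrage price processes whose loadings simultaneously span the priced risk and remain predictably invertible, while staying orthogonal to the idiosyncratic block. Equivalently, the delicate point is arranging the orthonormal rotation of $W$ to be predictable and consistent across all securities, so that the residual direction of each $S_i$ collapses to the clean form $\delta_i\,d\epsilon_i$ with $\epsilon_i$ independent of every factor.
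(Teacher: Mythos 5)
Be aware that the paper offers no proof of this theorem to compare against: the statement is imported verbatim from Jarrow and Protter, and the paper's entire ``proof'' is the citation ``See \cite[Thm.~4]{JarrowProtter2010}.'' Measured against the original argument, your outline follows the right broad lines (martingale representation in the Brownian filtration, market price of risk from no arbitrage, factor portfolios, L\'{e}vy's characterization), and you sensibly repaired the statement's typos ($dX_i/X_i$ should be $dX_k/X_k$, and $\delta_i$ should be the square root of the displayed sum). But your second step hides the real content of the theorem. No arbitrage gives $\mu_i(t)-r_t=\sum_{k=1}^{d}\sigma_{ik}(t)\lambda_k(t)$; after matching factor loadings on $W_1,\dots,W_K$, the leftover is $\sum_{k>K}\sigma_{ik}(t)\bigl(dW_k(t)+\lambda_k(t)\,dt\bigr)$, and for this to equal $\delta_i\,d\epsilon_i$ with $\epsilon_i$ a driftless $P$-Brownian motion you need $\lambda_k\equiv 0$ for $k>K$. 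That the market-price-of-risk process lies in a fixed, traded, $K$-dimensional block is \emph{not} a consequence of NFLVR --- it is the structural hypothesis in the Jarrow--Protter model that defines $K$ (``idiosyncratic risk is not priced''), and it is also what makes your ``predictable orthonormal change of basis'' and the traded factors $X_k$ exist at all. So the step you flag as ``the hard part'' is not a technical loose end to be filled in later; under the bare hypothesis ``given no arbitrage'' it cannot be completed, which is precisely why the theorem is a statement about a particular factor-model structure rather than about arbitrary arbitrage-free markets.

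Separately, your independence argument is invalid as stated: zero quadratic covariation does not imply independence, because the integrands, and not just the driving Brownian coordinates, carry information. For example, let a factor have volatility $\sigma^X_k(t)=2+\sin\bigl(W_d(t)\bigr)$ so that it ``loads only on $W_1$,'' and let $\epsilon_i=W_d$; then $\langle\epsilon_i,X_k\rangle\equiv 0$, yet $\langle\log X_k\rangle_t=\int_0^t\bigl(2+\sin(W_d(s))\bigr)^2ds$ reveals the path of $W_d$, so $X_k$ and $\epsilon_i$ are certainly not independent. To conclude independence you need, in addition to orthogonality, that the factors' drift and volatility coefficients are adapted to the filtration generated by $(W_1,\dots,W_K)$ and that the normalized residual integrand $\bigl(\sigma_{i,K+1},\dots,\sigma_{id}\bigr)/\delta_i$ is adapted to the filtration generated by $(W_{K+1},\dots,W_d)$ (or is deterministic), so that the two processes are functionals of independent Brownian blocks. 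That measurability structure is again an assumption of the factor model, not something L\'{e}vy's theorem hands you.
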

\begin{proof}
   See \cite[Thm.~4]{JarrowProtter2010}.
\end{proof}
In contrast to \cite{Cadogan2011b} endogenous alpha, \cite[pg.~18]{Jarrow2010}  surmised that ``[i]n active portfolio management, an econometrician would add a constant $\alpha_i(t)dt$ to [the right hand side of] this model". So \cite{Cadogan2011b} alpha is adaptive, whereas \cite{Jarrow2010} alpha is exogenous. Without loss of generality we assume that the filtration in each model is with respect to background driving Brownian motion.
\subsection{Trade strategy representation of alpha in single factor CAPM}\label{subsec:TradestrategyAlpha}
If our representation theory is well defined, then it should shed light on the behavior of \emph{alpha} in a single factor model like CAPM where there is no hedge factor. In particular, for \emph{cumulative alpha} $A(t)$ let\footnote{See \cite[pp.~121-122]{ChristophFersonGlassman1998} for representation of alpha conditioned on a $Z$-matrix of economic information.}
\begin{align}
   A(t) &= Z\pmb{\gamma}(t)\\
   \intertext{where $Z$ is a hedge factor matrix, and $\pmb{\gamma}(t)$ is a trade strategy vector. Thus, for some vector $\pmb{\alpha}(t)$ we have}
   d\{A(t)|Z\} &= \pmb{\alpha}(t)dt = Zd\pmb{\gamma}(t).\\
   \text{Let}\;Z &=\mathds{1}_{\{n\}}\\
   \intertext{So that we have}
   (Z^TZ)^{-1}Z^T &= n^{-1}\mathds{1}_{\{n\}}^T\;\;\text{and}\;\;a^{(1,k)}(Z_s) = n^{-1},\;k=1,\dotsc,n\\
   \intertext{Substitution of these values in Theorem \autoref{thm:GammaRepresentationTheorem} gives us, by abuse of notation, the scalar equation for \emph{trade strategy alpha}}
   -\alpha(t)dt &=-d\gamma^{(1)}(t) = -\frac{x}{1-t}dt + dB(t)\label{eq:CadoganSingleFactorAlpha}
\end{align}
That is the equation of a Brownian bridge starting at $B(0)=x$ on the interval $[0,1]$. See \cite[pg.~268]{KarlinTaylor1981}. \autoref{fig:EmpiricalAlphaFig1} depicts 5-sample paths for a Brownian bridge over the interval $[0,1]$ based on the Matlab code in \autoref{apx:MatlabodeForBB}.
\begin{figure}
   \centering
      \begin{minipage}[h]{0.4\linewidth}
         \captionof{figure}{Alpha Representation}
         \label{fig:EmpiricalAlphaFig1}
         \centerline{\includegraphics[scale=.5]{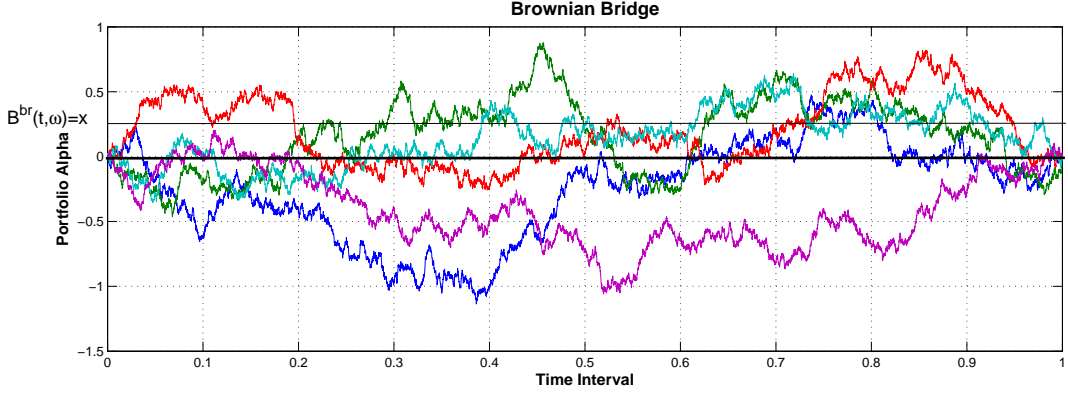}}
      \end{minipage}
   \hspace{1cm}
\end{figure}
So that trade strategy alpha in \eqref{eq:CadoganSingleFactorAlpha} is given by
\begin{align}
   d\gamma^{(1)}(t) &= -dB^{br}(t)\\
   \gamma^{(1)}(t) &= -B^{br}(0)+B^{br}(t)\label{eq:GammaEstimate}\\
   \intertext{However, there is more. According to Girsanov's formula in \cite[pg.~162]{Oksendal2003}, we have an equivalent probability measure $Q$ based on the martingale transform}
   M(t,\omega) &= \exp\biggl(\int^t_0\frac{x}{1-s}dB(s,\omega)-\int^t_0\biggl(\frac{x}{1-s}\biggr)^2 ds\biggr)\\
   dQ(\omega) &= M(T,\omega)dP(\omega),\quad 0\leq t\leq T\leq 1\\
   \intertext{Furthermore, we have the $Q$-Brownian motion}
   \hat{B}(t) &= -\int^t_0 \frac{x}{1-s}ds + B(t),\quad \text{and}\\
   d\gamma^{(1)}(t) &= -d\hat{B}(t)=-dB^{br}(t)\\
   \intertext{In other words, $\gamma^{(1)}$ is a $Q$-Brownian motion--in this case a Brownian bridge--that reverts to the origin starting at $x$. We note that}
   \frac{d\gamma^{(1)}(t)}{dt} &= -\frac{dB^{br}(t)}{dt} = \tilde{\epsilon}_t\\
   \intertext{Hence the ''residual(s)" $\tilde{\epsilon}_t$, associated with rate of change of Jensen's alpha, have an approximately skewed U-shape pattern for a Brownian bridge over $[0,1]$. Additionallly, \cite[pg.~270]{KarlinTaylor1981} show that the Brownian bridge can be represented by the Gaussian process}
   G(t) &= (1-t)B\biggl(\frac{t}{1-t}\biggr)\\
   E^Q[G(s_1)G(s_2)] &=
   \begin{cases}
      s_1(1-s_2)  & s_1 < s_2 \\
      s_2(1-s_1)  & s_2 < s_1
   \end{cases}\\
   \intertext{This is the so called \emph{Doob transformation}, see \cite[pp.~394-395]{Doob1949}, and}
   B(t) &= (1+t)B\biggl(\frac{t}{1+t}\biggr)\\
   \intertext{\cite[Thm. ~3.4.6,~pg.~174,~and~pg.~359]{KaratzasShreve1991} also provide further analytics which show that we can write the trade strategy alpha as}
   d\gamma^{(1)}(t) &= \frac{1-\gamma^{(1)}}{1-t}dt+dB(t);\quad 0\leq t\leq 1,\;\;\gamma^{(1)}=0\label{eq:GammaRepresentationAlternative}\\
   M(t) &= \int^t_0\frac{dB(s)}{1-s}\\
   T(s) &= \inf\{t| <M>_t > s\}\\
   M(t) &= B_{<M>_{T(t)}},\quad B_s = M_{T(s)}\label{eq:DambisDubinsSchwarz}
\end{align}
where $<M>_t$ is the quadratic variation\footnote{See \cite[pg.~31]{KaratzasShreve1991}.} of the local martingale $M$. The time changed process above is consistent with the clock time and trading or business time dichotomy reported in \cite[pg.~138]{Clark1973} and \cite[pp.~117-118]{CarrWu2004}.
\subsection{Local time of alpha for single factor CAPM}\label{subsec:CAPM_LocalTime}
Under the Dambis-Dubins-Schwarz criteria in \eqref{eq:DambisDubinsSchwarz}, alpha is a time changed martingale--in this case Brownian motion. In the absence of a hedge factor, the single factor or benchmark, is perfectly tracked if
\begin{align}
   \gamma^{(1)}(t) &= 0\\
   \intertext{Whereupon in \ref{eq:GammaEstimate} we must have }
   \hat{B}(t,\omega) &= x
\end{align}
This reduces the problem to one of local time [of a Brownian bridge] at $x$ depicted in \autoref{fig:EmpiricalAlphaFig1} on page \pageref{fig:EmpiricalAlphaFig1}. We can think of $x$ as a hurdle rate such as transaction costs that the manager must attain to break even. The Lebesgue measure associated with the level set
\begin{align}
   \mathcal{Z} &= \{\omega|\;\hat{B}(t,\omega)=x\}\label{eq:AlphaLevelSet}
   \intertext{is such that}
   Q\{\mathcal{Z}\} &= 0
\end{align}
So that a perfectly tracked benchmark--one with zero alpha--has Lebesgue measure zero. Whereupon single factor equilibrium asset pricing models like the CAPM exists on a set of measure zero. In other words, there is market disequilibrium, and hence arbitrage opportunities, almost surely\footnote{See e.g. \cite{ShleiferVishny1997} (arbitrage cannot be priced away for sufficiently large mispricing away from fundamentals)}. Thus, there is a non-zero probability to build a so called Dutch book against a tracked benchmark portfolio\footnote{\cite[pg.~19]{Jarrow2010} critiqued the existence of positive alphas on the basis of a formulation diferent from ours}. In particular, \emph{Tanaka`s} formula\footnote{See \cite[pg.~205]{KaratzasShreve1991}} tells us that the local time $L(t,x,\omega)$ of \emph{alpha}\footnote{Recall that $\gamma^{(1)}=\hat{B}(0)-\hat{B}(t)$ in \ref{eq:GammaEstimate}.} on $\mathcal{Z}$ is given by
\begin{align}
   2L(t,x,\omega) &= |\hat{B}(t,\omega)-x|-|\mathfrak{z}-x|-\int^t_0\text{sgn}(\hat{B}(s,\omega)-x)d\hat{B}(s,\omega)
\end{align}
for some $\mathfrak{z}\in \mathbb{R}$\footnote{\cite[pp.~337-338]{DellacherieMeyer1982} provides details on Tanaka formula.}. In other words, even though the CAPM has measure zero probability, its local time exists. Perhaps more important, the perfectly hedged portfolio problem, i.e. the CAPM problem, reduces to one of stochastic optimal control--guiding $\gamma^{(1)}$ to a goal of $0$ by keeping it as close to $0$ as possible. This problem, and related ones, were solved by \cite{BenesSheppWitsen1980} and in \cite[Chapter~6.2]{KaratzasShreve1991}. Thus, our \emph{alpha representation} vector in Theorem \ref{thm:GammaRepresentationTheorem} subsumes \ref{eq:GammaRepresentationAlternative}, and shed new light on equilibrium asset pricing models like the CAPM, and explain why Jensen's alpha is oftentimes negative in the empirical literature. To wit, negative [positive] Jensen's alpha is an artifact of its Brownian bridge feature in \ref{eq:GammaEstimate} according as $B^{br}(0)$ is less [greater] than $B^{br}(t)$. It is always reverting back to an equilibrium zero that has probability zero of absorbtion. We close this subsection with the following.
\begin{prop}[Zero sets and local time of single factor CAPM]\label{prop:CAPM_ZeroSetLocalTime}~\\
   The single factor CAPM exists on a set $\mathcal{Z}=\{\omega|\;\alpha(t,\omega)=0\}$ with Q-measure zero, and $x=0$ in Tanaka's formula, where $\alpha(t,\omega)=\hat{B}(t,\omega)$ with local time $$2L(t,0,\omega) = |\hat{B}(t,\omega)|-|\mathfrak{z}|-\int^t_0\text{sgn}(\hat{B}(s,\omega))d\hat{B}(s,\omega)$$ where $\mathfrak{z}\in \mathbb{R}$ and $\hat{B}(s,\omega)$ is a Brownian bridge. \hfill $\Box$
\end{prop}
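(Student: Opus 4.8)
The plan is to obtain the proposition as the $x=0$ specialization of the level-set and local-time analysis already carried out in \autoref{subsec:CAPM_LocalTime}, so that essentially no new machinery is needed; the work lies in pinning down the identifications and in checking the hypotheses under which Tanaka's formula applies. First I would recall from \eqref{eq:CadoganSingleFactorAlpha} and \eqref{eq:GammaEstimate} that in the single-factor CAPM the cumulative trade-strategy process $\gamma^{(1)}$ is, under the Girsanov measure $Q$, a Brownian bridge $\hat{B}$ started at the hurdle rate $x$. Setting $x=0$ collapses the Girsanov density $M(t,\omega)$ to the constant $1$, so that $Q=P$ and $\hat{B}$ reduces to the bridge $B^{br}$ anchored at the origin; this is precisely the identification $\alpha(t,\omega)=\hat{B}(t,\omega)$ asserted in the statement, and the perfectly tracked (zero-alpha) benchmark corresponds to $\gamma^{(1)}(t)=0$, i.e. to $\hat{B}(t,\omega)=0$, which is the defining condition of $\mathcal{Z}$ in \eqref{eq:AlphaLevelSet}.

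Next I would establish $Q\{\mathcal{Z}\}=0$. For each fixed $t\in(0,1)$ the marginal $\hat{B}(t)$ is Gaussian with mean $0$ and variance $t(1-t)$, hence absolutely continuous with respect to Lebesgue measure on $\mathbb{R}$; consequently the singleton $\{0\}$ carries no mass and $Q\{\omega:\hat{B}(t,\omega)=0\}=0$. This is the probabilistic counterpart of the Lebesgue-null \emph{time}-set statement recorded after \eqref{eq:AlphaLevelSet}: by Fubini's theorem $E^{Q}\!\int_0^t \mathds{1}_{\{\hat{B}(s)=0\}}\,ds=\int_0^t Q\{\hat{B}(s)=0\}\,ds=0$, so the zero set is negligible both in $\omega$ and in $s$. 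Together these say that the CAPM equilibrium $\alpha\equiv 0$ is attained only on a doubly negligible set, which is the content of the first assertion.

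Finally I would produce the local-time representation by a direct substitution of $x=0$ into the general Tanaka decomposition displayed immediately before the proposition, namely $2L(t,x,\omega)=|\hat{B}(t,\omega)-x|-|\mathfrak{z}-x|-\int_0^t\text{sgn}(\hat{B}(s,\omega)-x)\,d\hat{B}(s,\omega)$; evaluating at $x=0$ yields the stated formula verbatim. The only genuine technical point, and the step I expect to be the main obstacle, is verifying that the bridge $\hat{B}$ qualifies for Tanaka's formula, i.e. that it is a continuous semimartingale carrying a well-defined local time at $0$. I would dispatch this by working on $[0,t]$ with $t<1$, where the bridge admits the semimartingale decomposition of the type in \eqref{eq:GammaRepresentationAlternative}, a square-integrable martingale part plus a finite-variation drift whose only singularity sits at the excluded endpoint $s=1$ (the drift is bounded on $[0,t]$ since $1-s\geq 1-t>0$ and the paths are continuous). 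Hence the local time $L(t,0,\omega)$ exists by the standard construction, and the Dambis--Dubins--Schwarz time change in \eqref{eq:DambisDubinsSchwarz} identifies the martingale part of $\hat{B}$ with a time-changed Brownian motion, which justifies the application of Tanaka's formula and closes the argument.
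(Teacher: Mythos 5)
Your proposal is correct and takes essentially the same route as the paper, which (without a formal proof environment) obtains the proposition by specializing the level-set and Tanaka-formula discussion of \autoref{subsec:CAPM_LocalTime} to $x=0$: identify $\alpha$ with $\hat{B}$ via \eqref{eq:GammaEstimate}, note $Q\{\mathcal{Z}\}=0$, and substitute $x=0$ into the displayed Tanaka decomposition. The only difference is that you fill in details the paper merely asserts --- the Gaussian-marginal and Fubini argument for $Q\{\mathcal{Z}\}=0$, and the verification that $\hat{B}$ is a continuous semimartingale on $[0,t]$, $t<1$, so that Tanaka's formula and the local time at $0$ are well defined.
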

\begin{cor}[Tracking error problem for augmented CAPM]\label{cor:CAPM_TrackingError}~\\
    The tracking error problem for a perfectly hedged augmented CAPM is functionally equivalent to guiding alpha to a goal of zero. \hfill $\Box$
\end{cor}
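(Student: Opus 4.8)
The plan is to read the \emph{single factor CAPM} off the specialization $Z=\mathds{1}_{\{n\}}$ already carried out in \eqref{eq:CadoganSingleFactorAlpha}--\eqref{eq:GammaEstimate}, and then to apply Tanaka's formula at the break-even level $x=0$. First I would recall that, with $a^{(1,k)}(Z_s)=n^{-1}$, Theorem \ref{thm:GammaRepresentationTheorem} collapses to the scalar identity $-\alpha(t)\,dt=-d\gamma^{(1)}(t)=-\frac{x}{1-t}\,dt+dB(t)$, and that the Girsanov change of measure producing $Q$ turns $\hat{B}(t)=-\int_0^t\frac{x}{1-s}\,ds+B(t)$ into a $Q$-Brownian bridge with $d\gamma^{(1)}(t)=-d\hat{B}(t)$. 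Perfect tracking of the benchmark is precisely $\gamma^{(1)}(t)\equiv 0$, which by \eqref{eq:GammaEstimate} forces $\hat{B}(t,\omega)=x$; taking the break-even hurdle $x=0$ then collapses the three conditions $\{\gamma^{(1)}=0\}$, $\{\hat{B}(t)=0\}$ and $\{\alpha(t)=0\}$ onto one another, identifying $\alpha(t,\omega)=\hat{B}(t,\omega)$ and the zero set $\mathcal{Z}=\{\omega:\alpha(t,\omega)=0\}=\{\omega:\hat{B}(t,\omega)=0\}$. This delivers the claims $x=0$ and $\alpha=\hat{B}$ directly from the construction.

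Second, I would establish $Q\{\mathcal{Z}\}=0$. For fixed $t\in(0,1)$ the marginal $\hat{B}(t)$ is Gaussian with strictly positive variance (of order $t(1-t)$), hence carries no atom at $0$, so $Q\{\hat{B}(t)=0\}=0$; complementarily, the random time set $\{s\le t:\hat{B}(s)=0\}$ has zero Lebesgue measure almost surely, the standard occupation-time fact for continuous semimartingales. Both readings reproduce the measure-zero conclusion advertised at \eqref{eq:AlphaLevelSet}. Third, with the level fixed at $x=0$ I would invoke the semimartingale form of the bridge in \eqref{eq:GammaRepresentationAlternative}, which exhibits $\hat{B}$ as a continuous semimartingale to which Tanaka's formula legitimately applies, and substitute $x=0$ into $2L(t,x,\omega)=|\hat{B}(t,\omega)-x|-|\mathfrak{z}-x|-\int_0^t\text{sgn}(\hat{B}(s,\omega)-x)\,d\hat{B}(s,\omega)$ to obtain the stated expression for $2L(t,0,\omega)$. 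This simultaneously proves existence of the local time and reconciles it with $Q\{\mathcal{Z}\}=0$: the occupation density concentrates on a time set of Lebesgue measure zero yet is itself nontrivial.

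The hard part will be the regularity at the right endpoint $t\uparrow 1$, where the bridge drift $\frac{1-\gamma^{(1)}}{1-t}$ blows up: I would confine the Tanaka identity to $t<1$, or equivalently pass through the Dambis--Dubins--Schwarz time change \eqref{eq:DambisDubinsSchwarz}, reading the local time off the underlying $Q$-Brownian motion before the clock $\langle M\rangle_t$ is exhausted, so that no pathology at the pinned endpoint contaminates the semimartingale decomposition. Finally, the Corollary is immediate from this identification: interpreting the tracking error as $\gamma^{(1)}$ (equivalently $\alpha$), a \emph{perfectly hedged} augmented CAPM is exactly the requirement $\gamma^{(1)}\to 0$, i.e. steering $\alpha$ to the goal $0$, which is the controller-stopper / tracking-to-a-goal problem solved in \cite{BenesSheppWitsen1980} and \cite[Chapter~6.2]{KaratzasShreve1991}.
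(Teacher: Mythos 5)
Your proposal is correct and, in its final paragraph, reproduces exactly the paper's (implicit) argument: the paper never gives a separate proof of this corollary, deriving it directly from the identification in Section \ref{subsec:CAPM_LocalTime} that perfect tracking means $\gamma^{(1)}(t)=0$ (equivalently $\hat{B}(t,\omega)=x$, i.e.\ $\alpha=0$), so the perfectly hedged problem is the stochastic control problem of steering $\gamma^{(1)}$ (hence alpha) to the goal $0$, solved in \cite{BenesSheppWitsen1980} and \cite[Chapter~6.2]{KaratzasShreve1991}. Your additional work on $Q\{\mathcal{Z}\}=0$, Tanaka's formula, and endpoint regularity is sound but duplicates the content of Proposition \ref{prop:CAPM_ZeroSetLocalTime} rather than being needed for the corollary itself.
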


\subsection{Stochastic equivalence between Jensen-Jarrow alpha and\\ trade strategy alpha}\label{subsec:StochasticEquivAlpha}

We use a single factor representation of Theorem \autoref{thm:GammaRepresentationTheorem} and Theorem \autoref{thm:JarrowProtterAlpha} with $K=1$ for comparison as follows\footnote{\cite[pg.~12,~eq.~(12)]{JarrowProtter2010} refer to the ensuing as a 'regression equation" for which an econometrican tests the null hypothesis $H_0: \alpha(t)dt=0$. However, \cite[eq.~2.1]{Cadogan2011b} applied asymptotic theory to econometric specification of a canonical multifactor linear asset pricing model augmented with portfolio manager trading strategy to identify portfolio alpha.}. Using \cite[eq(5),~pg.~18]{Jarrow2010} formulation, let
\begin{align}
    \alpha(t)dt &= \frac{dS(t)}{S(t)} - r_tdt -\beta_1(t)\biggl(\frac{dX_1(t)}{X_1(t)} - r_tdt\biggr)-\sigma dB(t)\\
    &= \frac{dS(t)}{S(t)}-\beta_1(t)\frac{dX_1(t)}{X_1(t)}-r_t(1-\beta_1(t))dt - \sigma dB(t)\label{eq:JarrowProtterSingleFactorAlpha}
\end{align}
Comparison with \ref{eq:CadoganSingleFactorAlpha} suggests that the two alphas are equivalent if the following restrictions are imposed
\begin{align}
    &d\gamma^{(1)}(t)\equiv\alpha(t)dt\\
    &\Rightarrow\frac{dS(t)}{S(t)}-\beta_1(t)\frac{dX_1(t)}{X_1(t)} = 0\label{eq:Self_Fin_Portfolio}\\
    &\sigma = 1\\
    &\frac{x}{1-t} = (\beta_1(t)-1)r_t
\end{align}
So that
\begin{align}
    -d\gamma^{(1)}&=-(\beta_1(t)-1)r_t dt+dB(t)\label{eq:JarrowCadoganEquivalence}\\
    \intertext{In fact, if for some constant drift $\mu_X$, volatility $\sigma_X$, and $P$-Brownian motion $B_X$ we specify the ''hedge factor" dynamics}
    \frac{dX_1(t)}{X_1(t)} &= \mu_Xdt + \sigma_X dB_X\label{eq:Factor1Dynamics}\\
    \intertext{then after applying Girsanov's change of measure to \ref{eq:Factor1Dynamics} we can rewrite \ref{eq:Self_Fin_Portfolio} as}
    \frac{dS(t)}{S(t)} &= \beta_1(t)\sigma_X d\tilde{B}_X(t)\label{eq:FuncEquivAssetPriceDynamics}
\end{align}
for some $Q$-Brownian motion $\tilde{B}_X$. These restrictions, required for functional equivalence between the two models, are admissible and fairly mild. We summarize the forgoing in the following
\begin{prop}[Stochastic equivalence of alpha in single factor models.]\label{prop:StochasticEquivalenceModels}
   Assume that asset prices are determined by a single factor linear asst pricing model. Then the \cite{JarrowProtter2010} return model is stochastically equivalent to \cite{Cadogan2011b} trading strategy representation model. \hfill $\Box$
\end{prop}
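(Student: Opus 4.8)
The plan is to prove the equivalence constructively: I would match the single-factor reductions of the two alpha representations coefficient-by-coefficient, and then invoke Girsanov's theorem to certify that the identifying restrictions needed for the match are admissible under an equivalent measure. Since both reduced dynamics turn out to be linear Gaussian diffusions on $[0,1]$, equality of drift and diffusion coefficients together with a common initial condition delivers equality of finite-dimensional distributions, which is precisely the content of stochastic equivalence.

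First I would specialize each model to a single factor. On the \cite{Cadogan2011b} side I would use the reduction $Z=\mathds{1}_{\{n\}}$ already carried out in \autoref{subsec:TradestrategyAlpha}, giving the Brownian-bridge SDE \eqref{eq:CadoganSingleFactorAlpha}, namely $-\alpha(t)\,dt=-\frac{x}{1-t}\,dt+dB(t)$. On the \cite{JarrowProtter2010} side I would set $K=1$ in Theorem \ref{thm:JarrowProtterAlpha} and solve for the exogenous alpha to obtain \eqref{eq:JarrowProtterSingleFactorAlpha}. The object of the proof is then to show that these two diffusions share a common law.

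Second I would impose the three identifying restrictions that align the coefficients: the self-financing relation \eqref{eq:Self_Fin_Portfolio}, $\tfrac{dS(t)}{S(t)}=\beta_1(t)\tfrac{dX_1(t)}{X_1(t)}$, which annihilates the price-difference term; the volatility normalization $\sigma=1$; and the drift-matching identity $\tfrac{x}{1-t}=(\beta_1(t)-1)r_t$. Under these the Jarrow alpha collapses to \eqref{eq:JarrowCadoganEquivalence}, which is identical in form to \eqref{eq:CadoganSingleFactorAlpha}. As both are then linear Gaussian diffusions with the same drift $\tfrac{x}{1-t}$ and unit diffusion coefficient on $[0,1]$ started at the same value, I would conclude that each is a version of the other, hence stochastically equivalent.

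The hard part is showing that restriction \eqref{eq:Self_Fin_Portfolio} is substantive rather than a hollow assumption. I would discharge this by positing the constant-coefficient factor dynamics \eqref{eq:Factor1Dynamics}, $\tfrac{dX_1(t)}{X_1(t)}=\mu_X\,dt+\sigma_X\,dB_X$, and then applying Girsanov's change of measure to pass from $P$ to an equivalent measure $Q$ under which the drift $\mu_X$ is absorbed, so that the self-financing price obeys \eqref{eq:FuncEquivAssetPriceDynamics}, $\tfrac{dS(t)}{S(t)}=\beta_1(t)\sigma_X\,d\tilde B_X(t)$ for a $Q$-Brownian motion $\tilde B_X$. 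The one technical item to verify is that the Radon--Nikodym density is a true martingale rather than merely a local one; because $\mu_X$ and $\sigma_X$ are constant and the horizon $[0,1]$ is bounded, a Novikov-type bound on $\int_0^t(\mu_X/\sigma_X)^2\,ds$ holds at once, so $Q$ is well defined and equivalent to $P$. Confirming that under $Q$ the two alpha SDEs coincide in law then completes the argument.
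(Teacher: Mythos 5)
Your proposal is correct and follows essentially the same route as the paper: the same single-factor reductions on each side, the same three identifying restrictions (\eqref{eq:Self_Fin_Portfolio}, $\sigma=1$, and the drift match $\tfrac{x}{1-t}=(\beta_1(t)-1)r_t$) collapsing the Jarrow alpha to \eqref{eq:JarrowCadoganEquivalence}, and the same Girsanov step on the factor dynamics \eqref{eq:Factor1Dynamics} yielding \eqref{eq:FuncEquivAssetPriceDynamics}. Your explicit Novikov check that the Girsanov density is a true martingale (immediate from constant $\mu_X,\sigma_X$ and the bounded horizon) is a small rigor improvement over the paper, which simply asserts the restrictions are ``admissible and fairly mild,'' but the argument is otherwise the paper's own.
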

\begin{cor}[\cite{Jarrow2010} trade strategy drift factor]\label{cor:JarrowTradeStrategy}~\\
   $\{\beta_1(t)\}_{t\in[0,T]}$ in \cite{Jarrow2010} is a trade strategy drift factor. \hfill $\Box$
\end{cor}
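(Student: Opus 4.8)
The plan is to prove Corollary \ref{cor:JarrowTradeStrategy} as an immediate consequence of the stochastic equivalence established in Proposition \ref{prop:StochasticEquivalenceModels}, together with the identifying restrictions already derived in the preceding subsection. The essential observation is that the word ``drift'' is to be taken literally: once the two models are forced into functional equivalence, the coefficient $\beta_1(t)$ reappears not as a static factor loading but inside the drift term of the Brownian-bridge dynamics governing $\gamma^{(1)}$, and that drift is precisely the trade strategy identified in Theorem \ref{thm:GammaRepresentationTheorem}.

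First I would recall the equivalence relation \eqref{eq:JarrowCadoganEquivalence}, namely $-d\gamma^{(1)}(t) = -(\beta_1(t)-1)r_t\,dt + dB(t)$, which is the scalar trade-strategy equation \eqref{eq:CadoganSingleFactorAlpha} after the substitution $x/(1-t) = (\beta_1(t)-1)r_t$. Since the left-hand side $d\gamma^{(1)}$ is, by Theorem \ref{thm:GammaRepresentationTheorem} and the discussion culminating in \eqref{eq:GammaRepresentationAlternative}, the differential of the portfolio manager's trade strategy (the hedge-factor sensitivity process), the drift coefficient $(\beta_1(t)-1)r_t$ is literally the drift of that trade-strategy process. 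Solving the identifying restriction for $\beta_1(t)$ gives
\begin{equation*}
   \beta_1(t) = 1 + \frac{x}{(1-t)\,r_t},
\end{equation*}
which exhibits $\beta_1(t)$ as a deterministic, time-varying modulation of the bridge drift $x/(1-t)$; thus $\{\beta_1(t)\}_{t\in[0,T]}$ is the quantity that calibrates the drift pulling $\gamma^{(1)}$ toward its goal, and is therefore a \emph{trade strategy drift factor} in the sense required.

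To complete the argument I would then invoke Proposition \ref{prop:StochasticEquivalenceModels} to assert that this identification is not merely formal but holds in the stochastically-equivalent model: under the mild restrictions \eqref{eq:Self_Fin_Portfolio}, $\sigma=1$, and the factor dynamics \eqref{eq:Factor1Dynamics}--\eqref{eq:FuncEquivAssetPriceDynamics}, the Jarrow--Protter return equation of Theorem \ref{thm:JarrowProtterAlpha} with $K=1$ is law-equivalent to the trade-strategy representation, so the role played by $\beta_1(t)$ in one model is carried over intact to the other. Since in the Cadogan model the analogous coefficient is manifestly a component of the trading strategy $\gamma$, the equivalence transports that interpretation to $\beta_1(t)$ in Jarrow's formulation.

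The only genuine subtlety, rather than a routine calculation, is the status of $\beta_1(t)$ as a \emph{drift} rather than a diffusion coefficient. In the original Jarrow--Protter specification $\beta_1(t)$ multiplies the factor excess return $\left(\frac{dX_1(t)}{X_1(t)} - r_t\,dt\right)$, which contains both drift and diffusion parts; I would need to make explicit that the self-financing restriction \eqref{eq:Self_Fin_Portfolio} kills the diffusion contribution of $\beta_1$ to the alpha equation (it is absorbed into \eqref{eq:FuncEquivAssetPriceDynamics}), leaving $\beta_1(t)$ to act purely through the $r_t(1-\beta_1(t))\,dt$ drift term of \eqref{eq:JarrowProtterSingleFactorAlpha}. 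Once this is pinned down, the matching of drift terms in \eqref{eq:JarrowCadoganEquivalence} is forced, and the corollary follows with essentially no further work, so I would keep the proof to a few lines referencing the equation labels above.
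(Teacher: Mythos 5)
Your proposal is correct and takes essentially the same route as the paper, which treats the corollary as an immediate reading-off of the equivalence derivation: the identifying restriction $\tfrac{x}{1-t}=(\beta_1(t)-1)r_t$ places $\beta_1(t)$ inside the drift of the trade-strategy dynamics \eqref{eq:JarrowCadoganEquivalence}, so that $\beta_1(t)$ modulates the Brownian-bridge drift of $\gamma^{(1)}$. Your added remarks --- solving explicitly for $\beta_1(t)=1+\tfrac{x}{(1-t)r_t}$ and noting that the restriction \eqref{eq:Self_Fin_Portfolio} removes the diffusion contribution of $\beta_1$ --- merely make explicit what the paper leaves implicit, and are consistent with it.
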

\tab More on point, \cite[pg.~20]{Jarrow2010} introduced an \emph{event swap} to illustrate the impact of what he deemed to be a ''phantom factor" on portfolio alpha. Specifically, he assumed the possibility of a ''market-systemic" event like a stock market crash, and a financial contract with notional value \$ 1 for which the buyer pays a fixed spread $c$ above the risk free rate $r_t$ for protection $I$ from the seller in the event the ''market-systemic" event occurs. In other words, $c$ is a risk premium. This event is incorporated in the model under the noton of a long position in the ''K-th factor" $X_K(t)$ which is functionally equivalent to a \emph{perpetual event swap} on the market-systemic event. The return on the $K$-th factor was characterized thus:
\begin{equation}
    \text{return on } X_K =
    \begin{cases}
        \frac{-I+X_K(t)}{X_K(t)} & \text{if market-systemic event occurs}\label{eq:ReturnOnKthFactor}\\
        (c+r_t) & \text{otherwise}
    \end{cases}
\end{equation}
Introduction of the market-systemic event alters \ref{eq:JarrowProtterSingleFactorAlpha} by virtue of the $K$-th factor's return, as follows, when $K=1$. Let $\chi_E(t)$ be the characteristic function of the set
\begin{align}
   E &= \{\text{market-systemic event(s)}\}\\
   \alpha(t)dt &= \frac{dS(t)}{S(t)}-(r_t+\beta_1(t)c)dt-\sigma dB(t)-\chi_E(t)\beta_1(t)\biggl(\frac{-I+X_1(t)}{X_1(t)}\biggr)\label{eq:MarketSystemicAlpha}\\
   \intertext{To establish functional equivalence with \ref{eq:CadoganSingleFactorAlpha} we must have the identifying restrictions}
   \sigma &= 1\\
   \frac{dS(t)}{S(t)}&-\chi_E(t)\beta_1(t)\biggl(\frac{-I+X_1(t)}{X_1(t)}\biggr) = 0\\
   S(t) &= S(t-\Delta)\exp\biggl(\int^t_0\chi_E(u)\beta_1(u)\frac{-I+X_1(u)}{X_1(u)}du\biggr),\quad t-\Delta\leq u\leq t\label{eq:AssetPriceJumpOnEvent}\\
   \frac{x}{1-t} &= -(r_t+\beta_1(t)c)\label{eq:AlphaBridgeFactor}
\end{align}
Note that now $B^{br}(t-\Delta)=B^{br}(0)$ where $\Delta$ is the time interval over which assets are priced. This is consistent with futures tradiing where the books are closed at the end of trading day. See \cite[pp.~27-28]{Hull2006}. The foregoing restrictions imply that the asset price $S(t)$ is flat prior to the occurrence of a market systemic event $E$ at time $t$, and its jumps are controlled by $\beta_1(t)$--the trade strategy factor. \cite{BelievaNawalkhaSoto2008} define market systemic events to include ``market crashes, interventions by the Federal Reserve, economic surprises, shocks in the foreign exchange markets, and other rare events". So the characteristic function $\chi_E(t)$ takes the value $1$ for any of those events. \cite[pg.~20]{Jarrow2010} assumed no arbitrage and that $\alpha(t)=0$ in \ref{eq:MarketSystemicAlpha}. Whereupon he concluded that the $\beta_1(t)c$ term is a ``false positive alpha", i.e. an illusionary alpha. Furthermore, he extended his analysis to actively managed portfolios in which the portfolio weights, i.e. factor exposure sensitivity, magnify the so called false positive alpha.

\section{Market-systemic events, interest rate risk, and portfolio alpha}\label{sec:MarketSystemicAlpha}

\tab According to functional equivalence Proposition \autoref{prop:StochasticEquivalenceModels}, in the context of our model, \cite[pg.~20]{Jarrow2010} analysis is problematic for several reasons. First, he \emph{assumed} the existence of a zero alpha event $\mathfrak{O}=\{\omega|\;\alpha_J(t,\omega)=0\}$. However, according to Proposition \ref{prop:CAPM_ZeroSetLocalTime} on page \pageref{prop:CAPM_ZeroSetLocalTime} that event has Lebesgue measure zero. In other words
\begin{align}
   \mathfrak{O}=\{\omega|\;\alpha_J(t,\omega)=0\} &\subseteq \mathcal{Z} = \{\omega|\;\hat{B}(t,\omega)=x\}\\
   \intertext{by virtue of \ref{eq:GammaEstimate}. Thus, $\mathfrak{O}$ is absolutely continuous with respect to $\mathcal{Z}$. Hence}
   Q\{\mathcal{Z}\} = 0 &\Rightarrow Q\{\mathfrak{O}\}=0
\end{align}
According to the Radon-Nykodym Theorem\footnote{See \cite[pg.~78]{GikhmanSkorokhod1969}.}, there exist a unique $\mathcal{Z}$-measureable function $f$ such that
\begin{align}
    \alpha_J(t,\omega) &= \int^t_0f(u)dQ_*(u,\omega)\\
    E\left[\alpha^2_J(t,\omega)\right] &\neq 0
\end{align}
For instance, the probability measure $Q_*$ is the canonical Wiener measure used in \cite{JarrowProtter2010}. So that $\alpha_J(t,\omega)$ is a local $Q_*$-martingale. In which case $f\in \text{L}_2(Q_*)$, the latter being the space of square Lebesgue integrable functions. So the variance of $\alpha_J(t,\omega)$ is non-zero even if $E\left[\alpha_J(t,\omega)\right]=0$. That implies that there times when $\alpha_J$ is above or below $0$. In our case, under Theorem \autoref{thm:GammaRepresentationTheorem} it is a Brownian bridge. The fluctuations around zero are manifest by absolute continuity, which suggests that for \cite[pg.~20]{Jarrow2010} assumed event, we have the probability result $Q(\mathfrak{O})=Q\{\alpha_J(t,\omega) =0 \}=0$ a.s. In other words, Jarrow's alpha is not zero. We summarize the analysis above in the following
\begin{prop}[Non-zero Jarrow alpha]\label{prop:NonZeroJarrowAlpha}~\\
   Let $\mathfrak{O}=\{\omega|\;\alpha_J(t,\omega)=0\}$ be the set of \cite[pg.~20]{Jarrow2010} alpha; $\mathcal{Z} = \{\omega|\;\hat{B}(t,\omega)=x\}$ be the level set at $x$ for $Q$-Brownian motion $\hat{B}$; and $Q_*$ be Wiener measure. There exist $f\in \text{L}_2(Q_*)$ such that $\alpha_J(t,\omega) = \int^t_0f(u)dQ_*(u,\omega)$, \;\;$\mathfrak{O}\subseteq \mathcal{Z}$ and $Q(\mathfrak{O})=0$. \hfill $\Box$
\end{prop}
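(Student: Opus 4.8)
The plan is to assemble the claimed statement from three ingredients already in place: the inclusion $\mathfrak{O}\subseteq\mathcal{Z}$, the measure-zero conclusion $Q(\mathfrak{O})=0$, and the Radon--Nikodym representation $\alpha_J(t,\omega)=\int_0^t f(u)\,dQ_*(u,\omega)$ with $f\in\mathrm{L}_2(Q_*)$. First I would invoke Proposition \ref{prop:StochasticEquivalenceModels} to identify, under the single-factor CAPM, Jarrow's alpha with the trade-strategy alpha, so that by equation \eqref{eq:GammaEstimate} we have $\gamma^{(1)}(t)=\hat{B}(0)-\hat{B}(t)=x-\hat{B}(t)$. Setting $\alpha_J(t,\omega)=0$ then forces $\gamma^{(1)}(t)=0$, which is algebraically equivalent to $\hat{B}(t,\omega)=x$. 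This is exactly the defining condition of $\mathcal{Z}$, so every $\omega$ placing $\alpha_J$ at zero lies in $\mathcal{Z}$; that yields the set inclusion $\mathfrak{O}\subseteq\mathcal{Z}$.

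Next I would establish the measure statement. Proposition \ref{prop:CAPM_ZeroSetLocalTime} already gives $Q\{\mathcal{Z}\}=0$, since $\hat{B}$ is a Brownian bridge and the event that a continuous-path nondegenerate Gaussian process hits a fixed level at a fixed time (or more precisely occupies its level set) carries $Q$-measure zero. Given the inclusion $\mathfrak{O}\subseteq\mathcal{Z}$ and monotonicity of the measure $Q$, it is immediate that $Q(\mathfrak{O})\le Q(\mathcal{Z})=0$, hence $Q(\mathfrak{O})=0$. This is the absolute-continuity step flagged in the text: $\mathfrak{O}$ is $Q$-null because it is swallowed by a $Q$-null set.

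Finally I would construct the Radon--Nikodym function. Taking $Q_*$ to be the canonical Wiener measure of \cite{JarrowProtter2010}, the equivalence of measures (via the Girsanov transform $M(t,\omega)$ displayed earlier) makes $\alpha_J$ a local $Q_*$-martingale. By the Radon--Nikodym theorem cited from \cite{GikhmanSkorokhod1969}, together with the martingale representation theorem for Brownian filtrations, there exists a predictable integrand $f$ with $\alpha_J(t,\omega)=\int_0^t f(u)\,dQ_*(u,\omega)$, and the requirement that the stochastic integral be a genuine (square-integrable) martingale places $f\in\mathrm{L}_2(Q_*)$. The Itô isometry then gives $E[\alpha_J^2(t,\omega)]=\int_0^t E[f(u)^2]\,du>0$ whenever $f$ is not $Q_*$-a.e.\ zero, reconciling $E[\alpha_J]=0$ with a strictly positive variance.

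The main obstacle I anticipate is the representation step rather than the set-theoretic ones. Asserting $f\in\mathrm{L}_2(Q_*)$ requires $\alpha_J$ to be not merely a local martingale but a square-integrable one on $[0,t]$; for a Brownian bridge on $[0,1]$ the variance blows up as $t\to 1$, so one must either restrict to $t\le T<1$ (as the Girsanov display $0\le t\le T\le 1$ already does) or argue the local-time representation only pathwise. The cleanest route is to work on a compact subinterval bounded away from the terminal time, invoke martingale representation there to obtain $f$, and note that positivity of $E[\alpha_J^2]$ needs only that $f$ not vanish identically — which follows precisely because $Q(\mathfrak{O})=0$ prevents $\alpha_J$ from being pinned at zero.
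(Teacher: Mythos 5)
Your proposal reproduces the paper's own argument step for step: the inclusion $\mathfrak{O}\subseteq\mathcal{Z}$ via the stochastic-equivalence identification and \eqref{eq:GammaEstimate}, the conclusion $Q(\mathfrak{O})=0$ by monotonicity from Proposition~\ref{prop:CAPM_ZeroSetLocalTime}, and the representation $\alpha_J(t,\omega)=\int_0^t f(u)\,dQ_*(u,\omega)$ with $f\in\mathrm{L}_2(Q_*)$ via the Wiener-measure/local-martingale property — your martingale-representation and It\^{o}-isometry gloss is simply a more careful rendering of the paper's bare Radon--Nikodym citation, not a different route. One small correction to your closing caveat: for the Brownian bridge it is the drift $x/(1-t)$ (and hence the Girsanov kernel), not the variance $t(1-t)$, that degenerates as $t\to 1$, which is exactly why the paper's Girsanov display already restricts to $0\leq t\leq T\leq 1$.
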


\tab Second, $\alpha_J(t) =0$ implies $x=0$ in \ref{eq:AlphaBridgeFactor}. Whereupon we have [in an initial equilibrium]
\begin{align}
    &r_t+\beta_1(t)c = 0,\quad r_t >0\label{eq:AlphaZeroCondition}\\
    \intertext{That equation is satisfied under the following scenarios}
    &\text{Scenario 1:} \quad \beta_1(t)>0 \quad \text{and}\;\; c<0\\
    &\text{Scenario 2:} \quad \beta_1(t)<0 \quad \text{and}\;\; c>0
\end{align}
Here, $\;c$ is a risk premium. The stochastic equivalence results summarized in Corollary \autoref{cor:JarrowTradeStrategy} plainly show that $\beta_1(t)$ [and hence $\beta_1(t)c$] is a trading strategy control variable. Therefore, \cite[pg.~20]{Jarrow2010} claim that $\beta_{iK}(t)c$ is a ``false positive alpha" eschews portfolio manager's ``reward" for strategic risk taking. This suggests that the Jarrow-Protter model may be misspecified. For in the context of Scenarios 1 and 2, the following strategies are immediate.

\tab In Scenario 1, our portfolio manager takes a long position, i.e. holds positive amounts, in the ``X-factor" and \emph{receives} a premium $c$--since $c$ is negative--above the risk free rate for doing so. This is similar to a call option strategy based on beliefs about X. So if and when event $E$ occurs, from \ref{eq:AssetPriceJumpOnEvent}, we have $S(t) > S(t-\Delta)$. In Scenario 2, our portfolio manager short sells the ``X-factor" and \emph{pays} a [positive] premium $c$  for the right to do so. This is similar to borrowing $c$ on margin to implement a put option strategy based on beliefs about X. If and when event $E$ occurs $S(t) < S(t-\Delta)$. These option strategies are formulated in the next subsection.

\subsection{The exotic option hedge against market-systemic events}\label{subsec:ExoticOptionHedgeAgainstSystemicRisk}
\tab We consider the market systemic event $E$--even though we do not know when it will happen. Let
\begin{align}
   S_{\text{E}} &\triangleq \{\text{price of asset when market systemic event occurs}\}\\
   K &\triangleq \{\text{strike price of the option such that $K > S_{\text{E}}$}\}\\
   E &= \{\omega|\;S_{\text{E}}(t,\omega|\;\Delta)=\exp\biggl(\int^t_{t-\Delta}\beta_1(u)\frac{-I+X_1(u,\omega)}{X_1(u,\omega)}du\biggr),\;\omega\in\Omega\}\\
   \tau(\omega) &= \inf\{t>0|\;S(t,\omega)\in E\}
\end{align}
The rare event probability associated with $E$ is small and perhaps unmeasureable.
Under Scenario 1, the call option premium is given by
\begin{align}
   &(S(\tau(\omega))-K)^+I_{S_E(t)<K<S(t)}\\
   \intertext{Similarly, under Scenario 2, the put option premium is}
   &(K-S(\tau(\omega))^+I_{S_E(t)<S(t)<K}
\end{align}
where $I$ is an indicator function for the indicated inequalities. Inclusion of the stopping time $\tau(\omega)$ in the option premium implies that they are American style barrier options\footnote{See \cite[pg.~175]{MusielaRutkowski2005}} which can be exercised at or any time after the first occurrence of the systemic event within the specified length of the contract. If the event does not occur, then the contract expires worthless.

\tab For exposition and analytic tractability, assume that in \ref{eq:Self_Fin_Portfolio}, \ref{eq:Factor1Dynamics}, and
\ref{eq:FuncEquivAssetPriceDynamics} the stock price dynamic is a geometric Brownian motion given by
\begin{align}
    \frac{dS(t)}{S(t)} = \mu_S dt &+ \sigma_S dB(t)\\
    \intertext{So that starting at say, time $t-\Delta$, the path properties evolves according to}
    S(t) = S(t-\Delta)\exp\Bigl(\{\mu_S &- \tfrac{\sigma^2_S}{2}\}t+\sigma_SB(t)\Bigr)\\
    \intertext{The probability distribution corresponding to the stopping time is given by}
    P\{S(t) < S_{\text{E}}\} = P\Bigl\{S(t-\Delta)\exp\Bigl(\{\mu_S &-\tfrac{\sigma^2_S}{2}\}t+\sigma_SB(t)\Bigr)<S_{\text{E}}\Bigr\}\\
    \Rightarrow P\{B(t) \leq B_{\text{E}}\} &= \Phi_B(B_{\text{E}}),\;\;\text{where}\\
    B_{\text{E}} = \frac{\ln\bigl(\tfrac{S_{\text{E}}}{S(t-\Delta)}\bigr) - \bigl(\mu_S-\tfrac{\sigma^2_S}{2}\bigr)t}{\sigma_S}&
\end{align}
and $\Phi_B$ is the cumulative normal distribution function for $B(t)\sim N(0,t)$. For the stopping time to exist we need $\Phi_B(B_{\text{E}}) > 0$ which implies that $\sigma_S >0$ and $t < \infty$. We summarize this result in
\begin{prop}[Market Crash Certainty]\label{prop:MarketCrashCertainty}~\\
   As long as asset price volatility exists, the market for a given asset will crash at some point in time. Moreover, the distribution function for the stopping time $\tau(\omega)=\inf\{t>0|\;S(t,\omega)\leq S_{\text{E}}\}$ is given by $\Phi_B(B_{\text{E}})$ where
    \begin{align*}
       \frac{dS(t)}{S(t)} &= \mu_S dt + \sigma_S dB(t)\\
       B_{\text{E}} &= \frac{\ln\bigl(\tfrac{S_{\text{E}}}{S(t-\Delta)}\bigr) - \bigl(\mu_S-\tfrac{\sigma^2_S}{2}\bigr)t}{\sigma_S}
    \end{align*}
and $\Phi_B$ is the distribution function for the Brownian motion $B(t,\omega)$. \hfill $\Box$
\end{prop}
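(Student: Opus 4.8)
The plan is to proceed in two stages, mirroring the two assertions of the proposition: first the explicit distribution function, then the almost-sure occurrence of the crash. For the distributional part, I would start from the closed-form solution of the geometric Brownian motion, namely $S(t)=S(t-\Delta)\exp\bigl((\mu_S-\tfrac{\sigma_S^2}{2})t+\sigma_S B(t)\bigr)$, which is already recorded in the excerpt just above the statement. The crash event $\{S(t)\leq S_{\text{E}}\}$ is then transformed into an event on the driving Brownian motion by taking logarithms and isolating $B(t)$: dividing by the strictly positive $\sigma_S$ yields $\{S(t)\leq S_{\text{E}}\}=\{B(t)\leq B_{\text{E}}\}$ with $B_{\text{E}}$ exactly as defined in the statement. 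Because $B(t)\sim N(0,t)$, the probability of this event is $\Phi_B(B_{\text{E}})$, which delivers the claimed distribution function directly.

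For the certainty part, I would argue as follows. Since $\sigma_S>0$ and $t<\infty$, the threshold $B_{\text{E}}$ is a finite real number, so the Gaussian tail $\Phi_B(B_{\text{E}})$ is strictly positive at every finite $t$; hence a crash carries positive probability at each instant. To upgrade this to an almost-sure statement I would invoke the oscillation of Brownian paths, specifically $\liminf_{t\to\infty}B(t)=-\infty$ a.s., so that the level $B_{\text{E}}$ is crossed from above with probability one, forcing the stopping time $\tau(\omega)=\inf\{t>0:S(t,\omega)\leq S_{\text{E}}\}$ to be finite almost surely. Contrapositively, if $\sigma_S=0$ the price is deterministic and the oscillation argument collapses, which is exactly why positive volatility is the operative hypothesis.

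The main obstacle I anticipate is reconciling a fixed price level with the $t$-dependent threshold. When the log-drift $\mu_S-\tfrac{\sigma_S^2}{2}$ is strictly positive, the effective barrier $B_{\text{E}}(t)$ drifts to $-\infty$ linearly in $t$, whereas the Brownian fluctuations are only of order $\sqrt{t\log\log t}$ by the law of the iterated logarithm; in that regime the naive oscillation argument does not by itself guarantee a.s. crossing, and one must either restrict attention to non-positive log-drift or supply a genuine first-passage analysis. A related subtlety worth flagging is that $\Phi_B(B_{\text{E}})$ is the one-dimensional marginal $P\{S(t)\leq S_{\text{E}}\}$ rather than the law of the first-passage time $\tau$ itself (the latter being inverse-Gaussian via the reflection principle); since the proposition identifies the two, the cleanest route is to read ``distribution function for the stopping time'' in the marginal sense used throughout the excerpt and to treat the drift dichotomy explicitly when asserting a.s. finiteness of $\tau$.
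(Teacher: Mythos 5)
Your distributional step is exactly the paper's own argument: the proposition is ``proved'' there by the displayed computation immediately preceding it --- solve the geometric Brownian motion, rewrite $\{S(t) < S_{\text{E}}\}$ as $\{B(t)\leq B_{\text{E}}\}$ by taking logarithms and dividing by $\sigma_S>0$, and read off the marginal probability $\Phi_B(B_{\text{E}})$. Where you part company is the certainty claim, and there you are more careful than the paper, not less. The paper's entire justification is the sentence that for the stopping time to exist one needs $\Phi_B(B_{\text{E}})>0$, which in turn forces $\sigma_S>0$ and $t<\infty$; almost-sure finiteness of $\tau$ is never actually argued, but is instead delegated to a remark citing \cite[Lemma~2,~pg.~434]{GikhmanSkorokhod1969} for ``a more general proof'' of the barrier-hitting problem. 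In other words, the paper commits precisely the non sequitur you flag: positive marginal probability at each finite $t$ does not yield $P\{\tau<\infty\}=1$. Your oscillation argument via $\liminf_{t\to\infty}B(t)=-\infty$ a.s.\ supplies a genuine proof in the regime $\mu_S-\tfrac{\sigma_S^2}{2}\leq 0$, and the obstruction you identify in the opposite regime is real: when $\mu_S-\tfrac{\sigma_S^2}{2}>0$ the probability of ever reaching the fixed level $S_{\text{E}}<S(t-\Delta)$ is $\bigl(S_{\text{E}}/S(t-\Delta)\bigr)^{2(\mu_S-\sigma_S^2/2)/\sigma_S^2}<1$, so the ``crash certainty'' assertion is false as stated, a dichotomy the paper never confronts. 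Likewise, your observation that $\Phi_B(B_{\text{E}})$ is the fixed-$t$ marginal law of $S(t)$ rather than the law of the first-passage time $\tau$ (which would be inverse Gaussian, by the reflection principle) pinpoints a conflation the paper makes silently. In short: on the part of the proposition that admits a proof, your route coincides with the paper's; on the rest, your proposal honestly exposes the two gaps that the paper papers over, and it is the stronger piece of mathematics for doing so.
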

\begin{rem}
   The interested reader is referred to \cite[Lemma~2,~pg.~434]{GikhmanSkorokhod1969} for a more general proof of the existence of finite stopping time for a barrier problem. \hfill $\Box$
\end{rem}
More on point, in this case our portfolio manager would want to buy a \emph{down-and-in put} contract as insurance against adverse market systemic event $E$ if the value of her portfolio falls to a certain level. By substituting $S(t-\Delta)=S(0)$ we can use the closed form formula (with dividend payout rate) in \cite[pp.~534-535]{Hull2006} to obtain the price of this option contract.
\subsection{The swaption hedge against implied interest rate risk caused by market-systemic events}\label{subsec:SwaptionHedgeAgainstMarketSystemic}

\tab Scenarios 1 and 2 imply an asset price ``jump" in \ref{eq:AssetPriceJumpOnEvent} induced by the event $E$, and controlled by the trade strategy variable $\beta_1(t)$. This implies the existence of an option strategy to hedge against the event $E$. In particular, the variables in \ref{eq:AlphaZeroCondition} suggest that our portfolio manager might want to hedge against \emph{de facto} interest rate risk by virtue of movements in $\beta_1(t)$--since we assumed that $c$ is constant and that the risk free rate $r_t$ is relatively deterministic. In other words, there exists a \emph{swaption} strategy to hedge against event $E$. The value of the principal amount of the swap is assumed to be \$1. This suggests that our portfolio manager executes a bond trading strategy to hedge against $E$. The portfolio manager wants to swap the floating rate $\beta_1(t)c$ for the relatively more stable $r_t$. Recall the following conditions for $\beta_1(t)c < 0$:
\begin{enumerate}
  \item [Call:]$\alpha(t) > 0\Rightarrow -\beta_1(t)c-r_t > 0$
  \item [Put:] $\alpha(t) < 0\Rightarrow -\beta_1(t)c-r_t < 0$
\end{enumerate}
Let $P(t,T)$ be the price of a zero coupon bond at time $t$, maturity date $T$ and nominal value $1$; and $N(\cdot)$ be the cumulative normal distribution. Then according to \cite[pp.~626-627]{Hull2006} the price of the swaption starting at time $t$ can be given as follows. Let $\mathcal{T}[t,\infty]$ be the set of all stopping times after time $t$, and
\begin{align}
   g(\beta_1(t)c,r_t) &= (\beta_1(t)c+r_t)^+
\end{align}
be the payoff function for a swap option between $r_t$ and $\beta_1(t)c$. According to \cite[pp.~175,~177]{MusielaRutkowski2005}, the value of the contingent claim at time $t$ is given by the adapted process\footnote{See \cite[pg.~6]{Myneni1992} for existence proof.}
\begin{align}
   V_t &= \text{ess}\;\sup_{\tau\in\mathcal{T}[t,\infty]} E^{P^*}[e^{-r(\tau - t)}g(\beta_1(\tau)c,r_\tau) ]\\
   d_1 &= \frac{\ln\Bigl(\frac{E^P[\beta_1(\tau(\omega))c]} {r_T}\Bigr)+\frac{\sigma^2T}{2}}{\sigma\sqrt{T}}\label{eq:d_1}\\
   d_2 &= d-1 - \sigma\sqrt{T}\label{eq:d_2}\\
   P_{\text{swap}} &= P(0,\tau(\omega))[-r_TN(-d_2)-E^P[\beta_1(\tau(\omega)c]N(-d_1)]\\
   C_{\text{swap}} &= P(0,\tau(\omega))[-E^P[\beta_1(\tau(\omega)c]N(d_1)-r_TN(d_2)]
\end{align}
Here, $P(0,\tau(\omega))$ is a discount factor by virtue of its zero-coupon feature. For the purpose of this paper, all we need to do is establish that $P_{\text{swap}}$ and $C_{\text{swap}}$ exist\footnote{Normally, we would have to evaluate $P(0,\tau(\omega))$ in order to determine the price of the swap. For instance, we could take a risk neutral approach and substitute $\sup_{\tau\in\mathcal{T}[t,\infty]}E^P[P(0,\tau(\omega))]$ instead.  \cite{BelievaNawalkhaSoto2008} show how to price options on zero-coupon bonds using a Vasicek extended jump (Vasicek-EJ) model they attribute to \cite{ChackoDas2002}.}. We begin with the existence of the stopping time.
\begin{lem}[Existence of stopping time for stock market crash]\label{lem:ExistenceStopTimeMarketCrash}~\\
   The stopping time $\tau(\omega) = \inf\{t>0|\;S(t,\omega)\leq S_{\text{E}}\}$ exist with probability 1 for stock price $S(t,\omega)$ and stock price at market market crash $S_{\text{E}}$.
\end{lem}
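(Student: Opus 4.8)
The plan is to reduce the crash time to a first-passage problem for a Brownian motion with drift and then settle finiteness by the long-run behaviour of that process. Writing the geometric Brownian motion in closed form, $S(t)=S(0)\exp(Y(t))$ with $Y(t)=\nu t+\sigma_S B(t)$ and drift $\nu:=\mu_S-\tfrac{\sigma_S^2}{2}$, the event $\{S(t)\le S_{\text{E}}\}$ is equivalent to $\{Y(t)\le b\}$ where $b:=\ln(S_{\text{E}}/S(0))<0$ (the inequality being strict because a crash means $S_{\text{E}}<S(0)$). Hence $\tau(\omega)$ is exactly the first-passage time of the drifted Brownian motion $Y$ to the fixed negative level $b$, and the whole question becomes whether $\inf_{t>0}Y(t)\le b$ almost surely.

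First I would dispose of the non-positive-drift case. When $\nu\le 0$ the strong law for Brownian motion gives $Y(t)/t\to\nu\le 0$, and combined with the recurrence (equivalently the lower half of the law of the iterated logarithm) of the driftless part $\sigma_S B(t)$ one obtains $\liminf_{t\to\infty}Y(t)=-\infty$ almost surely. Consequently $Y$ undershoots every finite level, in particular $b$, so $\tau<\infty$ a.s., and the distribution-function computation of \autoref{prop:MarketCrashCertainty} is the marginal snapshot of this passage. This case needs only $\sigma_S>0$, matching the heuristic that nonzero volatility forces the path below any barrier.

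The hard part is the positive-drift case $\nu>0$, and it is genuinely an obstacle rather than a calculation. There $Y(t)\to+\infty$ a.s., so the all-time minimum $M:=\inf_{t\ge 0}Y(t)$ is finite, and the classical exponential law $Q\{M\le -a\}=\exp(-2\nu a/\sigma_S^2)$ for $a>0$ shows that $Q\{\tau<\infty\}=Q\{M\le b\}=\exp(2\nu b/\sigma_S^2)<1$ since $b<0$. Thus for a stock whose log-price drifts strictly upward the barrier $S_{\text{E}}$ is missed forever with positive probability, and the lemma cannot hold \emph{verbatim}. The resolution I would adopt is to add the sufficient condition $\mu_S\le\tfrac{\sigma_S^2}{2}$ (equivalently $\sigma_S^2\ge 2\mu_S$, so that volatility dominates drift), under which $\nu\le 0$ and the argument of the preceding paragraph delivers $\tau<\infty$ a.s.; alternatively one invokes the general barrier-existence result of \cite[Lemma~2,~pg.~434]{GikhmanSkorokhod1969} cited in the Remark, whose hypotheses play the same role. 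Flagging precisely this drift restriction, rather than asserting that the mere existence of volatility suffices, is the crux of a correct proof.
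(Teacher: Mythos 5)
Your proposal is correct, and it in fact does the probabilistic work that the paper's own proof omits. The paper disposes of the lemma in one line, citing Proposition \ref{prop:MarketCrashCertainty} together with \cite[Lemma~2]{GikhmanSkorokhod1969}; but the justification given for Proposition \ref{prop:MarketCrashCertainty} in the text is only the fixed-time computation $P\{S(t)<S_{\text{E}}\}=\Phi_B(B_{\text{E}})>0$ for each finite $t$, a statement about the one-dimensional marginal law of $S(t)$ that does not imply $P\{\tau<\infty\}=1$: positive probability of being below the barrier at every fixed time is perfectly compatible with the path drifting to $+\infty$ and missing the barrier forever with positive probability. Your first-passage analysis supplies exactly the missing step and locates its limits: writing $S(t)=S(0)e^{Y(t)}$ with $Y(t)=\nu t+\sigma_S B(t)$ and $\nu=\mu_S-\tfrac{\sigma_S^2}{2}$, you get $\tau<\infty$ a.s. when $\nu\le 0$ (via $\liminf_{t\to\infty}Y(t)=-\infty$), while for $\nu>0$ the all-time minimum of $Y$ is exponentially distributed, giving $P\{\tau<\infty\}=\exp\bigl(2\nu b/\sigma_S^2\bigr)<1$ for $b=\ln\bigl(S_{\text{E}}/S(0)\bigr)<0$. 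So the lemma as stated --- and the slogan that volatility alone forces a crash --- is false whenever $\mu_S>\tfrac{\sigma_S^2}{2}$, and your repair (adding the drift restriction $\mu_S\le\tfrac{\sigma_S^2}{2}$, or else verifying the actual hypotheses of the Gikhman--Skorokhod barrier lemma rather than invoking it wholesale) is the honest formulation. In short: the paper's route buys brevity but inherits the flaw of Proposition \ref{prop:MarketCrashCertainty}; your route buys a sharp and correct dichotomy in the drift, which is what the statement actually requires.
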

\begin{proof}
   Apply Proposition \autoref{prop:MarketCrashCertainty} and \cite[Lemma~2]{GikhmanSkorokhod1969} which proves that $\tau(\omega)$ is finite with probability 1.
\end{proof}
Thus we have the following
\begin{prop}[Existence of  positive Jarrow alpha for swaption strategy]~\\
    Let $(\Omega,\mathcal{F}_t,\mathbb{F},P)$ be a filtered probability space, and $E = \{\omega|\;S(t,\omega)\leq S_{\text{E}},\;\omega\in\Omega\}$ be a market systemic event, where $S_E$ is the price of the underlying asset when the event occurs. Let $\tau:\Omega\rightarrow[0,\infty]$ be a mapping that takes non-negative values. Furthermore, let $P(t,T)$ be the price of a zero-coupon bond at time $t$ with maturity date$T$; $r_t$ be the risk free rate; and $\beta_1(t),\;\;0\leq t\leq T$ be a control variable in portfolio manager trade strategy. Let $N(~)$ be the cumulative normal distribution. Then positive alpha exists for the swaption hedge against interest rate risk caused by market systematic event $E$. In particular, the put and call prices for the swaption are give by
    \begin{align*}
       P_{\text{swap}} &= P(0,\tau(\omega))[-r_TN(-d_2)-E^P[\beta_1(\tau(\omega)c]N(-d_1)]\\
       C_{\text{swap}} &= P(0,\tau(\omega))[-E^P[\beta_1(\tau(\omega)c]N(d_1)-r_TN(d_2)]
    \end{align*}
    where $\tau(\omega)=\inf\{t>0|\;S(t,\omega)\leq S_E(t,\omega)\}$ and $E^P$ is the expectations operator with respect to the underlying probability measure $P$, and $d_1$ and $d_2$ are as indicated in \ref{eq:d_1} and \ref{eq:d_2}. \hfill $\Box$
\end{prop}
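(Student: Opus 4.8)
The plan is to reduce the assertion to three verifiable facts: (i) the exercise time $\tau(\omega)$ is finite almost surely; (ii) the optimal-stopping value process $V_t$ that defines the swaption is well-defined; and (iii) the resulting put and call prices coincide with the stated closed forms and are strictly positive, whence $\beta_1(t)c$ carries a genuine (non-illusory) price. First I would invoke Lemma \autoref{lem:ExistenceStopTimeMarketCrash}, together with Proposition \autoref{prop:MarketCrashCertainty}, to secure that $\tau(\omega)=\inf\{t>0\mid S(t,\omega)\leq S_{\text{E}}\}$ is finite with probability $1$ whenever the asset-price volatility $\sigma_S>0$; this places the contract in the regime where the systemic event $E$ is realized a.s. and the claim can actually be exercised.

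Next I would establish that the essential supremum
\[
   V_t = \text{ess}\,\sup_{\tau\in\mathcal{T}[t,\infty]} E^{P^*}\bigl[e^{-r(\tau-t)}g(\beta_1(\tau)c,r_\tau)\bigr]
\]
exists and is an adapted process. The payoff $g(\beta_1(\tau)c,r_\tau)=(\beta_1(\tau)c+r_\tau)^+$ is non-negative and adapted to $\mathbb{F}$, and under the standing assumptions the discounted payoff $e^{-r(\tau-t)}g$ is integrable; these are precisely the hypotheses under which the Snell-envelope / optimal-stopping construction of \cite[pg.~6]{Myneni1992} and \cite[pp.~175,~177]{MusielaRutkowski2005} yields a well-defined adapted value process together with an attained optimal stopping time. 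This step certifies that the contingent claim underlying the swaption is not vacuous.

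With $\tau(\omega)$ and $V_t$ in hand, I would then pass to the closed form. Treating $\beta_1(\tau)c$ as the (log-normally distributed) underlying forward rate to be swapped against $r_T$, I would apply Black's swaption formula at the random exercise date $\tau(\omega)$, discounting by the zero-coupon bond $P(0,\tau(\omega))$. Substituting $T=\tau(\omega)$ into the quantities $d_1$ and $d_2$ of \ref{eq:d_1}--\ref{eq:d_2} reproduces exactly the stated expressions for $P_{\text{swap}}$ and $C_{\text{swap}}$. Because $P(0,\tau(\omega))>0$, each $N(\cdot)$ factor lies strictly in $(0,1)$, and $E^P[\beta_1(\tau(\omega))c]\neq 0$ (the control $\beta_1$ being a non-trivial trade-strategy drift factor by Corollary \autoref{cor:JarrowTradeStrategy}), both prices are strictly positive. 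Hence the term $\beta_1(t)c$ that \cite[pg.~20]{Jarrow2010} dismissed as a ``false positive alpha'' in fact commands a strictly positive, tradeable price, which is the desired conclusion.

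The main obstacle is the third step: rigorously justifying the substitution of the \emph{random} stopping time $\tau(\omega)$ into a Black-type formula whose maturity is nominally deterministic, i.e. reconciling the American-style essential-supremum value $V_t$ with a European-style closed form. The clean way to handle this is to condition on the exercise date, using that along the optimal exercise boundary the value process equals its intrinsic payoff, so that the pricing identity holds pathwise on $\{\tau=s\}$ and then integrates against the law of $\tau$; the cited existence theory of \cite{Myneni1992} and \cite{MusielaRutkowski2005} is exactly what licenses this reduction. Everything else---finiteness of $\tau$ and strict positivity of the discount and distribution factors---is routine once that reduction is in place.
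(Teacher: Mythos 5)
Your overall route matches the paper's: its proof likewise rests on Lemma \ref{lem:ExistenceStopTimeMarketCrash} to get $P\{\tau(\omega)<\infty\}=1$ a.s., then argues that each ingredient of the displayed formulas is well defined --- citing \cite{ChackoDas2002} and the \cite{BelievaNawalkhaSoto2008} extension of \cite{Vasicek1977} to conclude that $P(0,\tau(\omega))$ is non-zero, and Corollary \ref{cor:JarrowTradeStrategy} to identify $\beta_1(\tau(\omega))$ as a well-defined trade-strategy drift factor --- and then simply concludes that since each element of the two equations exists, the swaption prices exist. Your steps (i) and (ii) reproduce this structure (the Snell-envelope appeal to \cite{Myneni1992} and \cite{MusielaRutkowski2005} appears in the paper's preamble and a footnote rather than inside its proof), and your flagging of the random-maturity problem is a genuine issue the paper passes over in silence.

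The gap is in your step (iii). Strict positivity of $P_{\text{swap}}$ and $C_{\text{swap}}$ does not follow from ``$P(0,\tau(\omega))>0$, each $N(\cdot)$ lies in $(0,1)$, and $E^P[\beta_1(\tau(\omega))c]\neq 0$'': both bracketed quantities are \emph{differences} of signed terms (for instance the call bracket contains $-r_TN(d_2)$, which is negative since $r_T>0$), so nonvanishing of the individual factors decides nothing about the sign of the sum. In Black's framework positivity comes from representing the bracket as a discounted expectation of a nonnegative payoff that is positive with positive probability, and obtaining that representation at the random date $\tau(\omega)$ is precisely the ``main obstacle'' you acknowledge but do not resolve; conditioning on $\{\tau=s\}$ is a plan, not a proof. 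Worse, under the relevant scenario $\beta_1(t)c<0$ the ratio $E^P[\beta_1(\tau(\omega))c]/r_T$ appearing in \ref{eq:d_1} is negative, so $d_1$ as written is not even defined without a sign reinterpretation, which your lognormal-forward reading would have to confront explicitly. To be fair, the paper's own proof never attempts positivity at all --- it establishes only that the components of the formulas exist --- so on this point your attempt is more ambitious than the paper's, but the additional claim you make is not actually established by the argument you give.
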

\begin{proof}
   Under Lemma \ref{lem:ExistenceStopTimeMarketCrash} $P\{\tau(\omega)<\infty\}=1$ a.s. Thus, according to \cite{ChackoDas2002}, and \cite{BelievaNawalkhaSoto2008} extension of \cite{Vasicek1977} model $P(0,\tau(\omega))$ is non-zero. Furthermore, under Corollary \ref{cor:JarrowTradeStrategy} $\beta_1(\tau(\omega)$ is a drift term factor for portfolio manager dynamic [price reversal] trade strategy representation for a single factor asset pricing model like the CAPM. Since each element of the equations for the swaption prices $P_{\text{swap}}$ and $C_{\text{swap}}$ exists, it follows that the swaption prices exist. Moreover, these trade strategies are the result of analytics derived from establishing functional equivalence between \cite{Jarrow2010} and \cite{Cadogan2011b} continuous time models in a single factor setting.
\end{proof}
\section{Conclusion}\label{sec:Conclusion}
This key result in this paper is that \cite{Cadogan2011b} adaptive continuous time representation theorem for alpha, derived from an asymptotic theory of portfolio alpha in a multifactor linear asset pricing model, provides insights about trade strategy whereas \cite{Jarrow2010} continuous time K-factor model does not. In the latter model, alpha is superimposed exogenously, and identifying restrictions are imposed to ascertain the efficacy of alpha. However, results from that approach may not reflect portfolio manager behavior. By contrast, an adaptive model of portfolio manager behavior reflects such behaviour and provides more granular contextual alphas.
\newpage
\appendix
\section*{Appendix}
\addcontentsline{toc}{section}{Appendix}
\section{Matlab code for generating Brownian Bridge}\label{apx:MatlabodeForBB}
\singlespace
\begin{verbatim}
% ******************** Brownian Bridge Function ********************
%% T is matruity time or length of interval
%% N is number of timesteps or inrements
%% dim is dimension, i.e. number of sample funtions, of BB
%%
%
% function bb=BrownianBridge(T,N,dim)
% bb=sqrt(T/N)*cumsum(randn(N,dim));
% bb=bb-bsxfun(@times,(1:N)'/N,bb(end,:))
% bb=[zeros(1,dim);bb];
% end

% ******************** Program Calling Funtion *******************
%
% clear, clf
% T=1
% N=10000
% dim=5
% TestBB=BrownianBridge(T,N,dim) % call Brownian Bridge function with specified parameters
% t=0:(1/N):1
% plot(t,TestBB)
% ylabel('Portfolio Alpha')
% xlabel('Time Interval')
\end{verbatim}
\newpage
\section*{}
\vspace{-10ex}
\bibliographystyle{chicago}        
\addcontentsline{toc}{section}{References} 
\bibliography{MarkeTiming2,EmpiricalAlpha}         

\begin{thebibliography}{}

\bibitem[\protect\citeauthoryear{Agarwal and Naik}{Agarwal and
  Naik}{2004}]{AgrawalNaik2004}
Agarwal, V. and N.~Naik (2004).
\newblock {Risks and Portfolio Decisions Involving Hedge Funds}.
\newblock {\em Review of Financial Studies\/}~{\em 17\/}(1), 63--98.

\bibitem[\protect\citeauthoryear{Beliaeva, Nawalkha, and Soto}{Beliaeva
  et~al.}{2008}]{BelievaNawalkhaSoto2008}
Beliaeva, N.~A., S.~K. Nawalkha, and G.~M. Soto (2008, Fall).
\newblock {Pricing American Interest Rate Options Under The Jump Extended
  Vasicek Model}.
\newblock {\em Journal of Derivatives\/}~{\em 16\/}(1), 29--43.

\bibitem[\protect\citeauthoryear{Benes, Shepp, and Witsenhausen}{Benes
  et~al.}{1980}]{BenesSheppWitsen1980}
Benes, V.~E., L.~A. Shepp, and H.~Witsenhausen (1980).
\newblock {Some solvable stochastic control problems}.
\newblock {\em Stochastics\/}~{\em 4}, 39--83.

\bibitem[\protect\citeauthoryear{Bollen and Busse}{Bollen and
  Busse}{2001}]{BollenBusse2001}
Bollen, P.~B. and J.~A. Busse (2001).
\newblock {On The Timing Ability of Mutual Fund Managers}.
\newblock {\em Journal of Finance\/}~{\em 56}, 1075--1094.

\bibitem[\protect\citeauthoryear{Breeden}{Breeden}{1979}]{Breeden1979}
Breeden, D. (1979).
\newblock {An Intertemporal Asset Pricing Model with Stochastic Consumption and
  Investment Opportunities}.
\newblock {\em Journal of Financial Economics\/}~{\em 7}, 266--295.

\bibitem[\protect\citeauthoryear{Brogaard}{Brogaard}{2010}]{Brogaard2010}
Brogaard, J.~A. (2010, Nov.).
\newblock {High Frequency Trading and Its Impact On Market Quality}.
\newblock Working Paper, Kellog School of Management, Department of Finance,
  Northwestern University. Available at SSRN eLibrary
  \href{http://ssrn.com/abstract=1641387}{http://ssrn.com/abstract=1641387}.

\bibitem[\protect\citeauthoryear{Cadogan}{Cadogan}{2011}]{Cadogan2011b}
Cadogan, G. (2011).
\newblock {Alpha Representation For Active Portfolio Management and High
  Frequency Trading In Seemingly Efficient Markets}.
\newblock In {\em Proceedings of Joint Statistical Meeting (JSM)}, Volume
  CD-ROM, Alexandria, VA, pp.\  673--687. Business and Economic Statistics
  Section: American Statistical Association.

\bibitem[\protect\citeauthoryear{Carhart}{Carhart}{1997}]{Cahart1997}
Carhart, M. (1997).
\newblock {On Persistence in Mutual Fund Performance}.
\newblock {\em Journal of Finance\/}~{\em 52\/}(1), 57--82.

\bibitem[\protect\citeauthoryear{Carr and Wu}{Carr and Wu}{2004}]{CarrWu2004}
Carr, P. and L.~Wu (2004).
\newblock {Time-Changed Levy Processes and Options Pricing}.
\newblock {\em Journal of Financial Economics\/}~{\em 71\/}(1), 113--141.

\bibitem[\protect\citeauthoryear{Chacko and Das}{Chacko and
  Das}{2002}]{ChackoDas2002}
Chacko, G. and S.~Das (2002).
\newblock {Pricing Interest Rate Derivatives: A General Approach}.
\newblock {\em Review of Financial Studies\/}~{\em 15\/}(1), 195--241.

\bibitem[\protect\citeauthoryear{Chance and Hemler}{Chance and
  Hemler}{2001}]{ChanceHemler2001}
Chance, D.~M. and M.~L. Hemler (2001).
\newblock {The Performance of Professional Market Timers: Daily Evidence From
  Executed Strategies}.
\newblock {\em Journal of Financial Economics\/}~{\em 62\/}(2), 377--411.

\bibitem[\protect\citeauthoryear{Christopherson, Ferson, and
  Glassman}{Christopherson et~al.}{1998}]{ChristophFersonGlassman1998}
Christopherson, J.~A., W.~A. Ferson, and D.~A. Glassman (1998, Spring).
\newblock {Conditional Manager Alphas On Economic Information: Another Look At
  The Persistence Of Performance}.
\newblock {\em Review of Financial Studies\/}~{\em 11\/}(1), 111--142.

\bibitem[\protect\citeauthoryear{Clark}{Clark}{1973}]{Clark1973}
Clark, P.~K. (1973, Jan).
\newblock {A Subordinated Stochastic Process Model With Finite Variance for
  Speculative Prices}.
\newblock {\em Econometrica\/}~{\em 41\/}(1), 135--155.

\bibitem[\protect\citeauthoryear{Dellacherie and Meyer}{Dellacherie and
  Meyer}{1982}]{DellacherieMeyer1982}
Dellacherie, C. and P.~Meyer (1982).
\newblock {\em Probabilities and Potential B: Theory of Martingales}, Volume~72
  of {\em North-Holland Mathematics Studies}.
\newblock New York: North-Holland Publishing Co.

\bibitem[\protect\citeauthoryear{Doob}{Doob}{1949}]{Doob1949}
Doob, J.~L. (1949).
\newblock {Heuristic Appproach to The Kolmogorov-Smirnov Theorems}.
\newblock {\em Ann. Math. Statist.\/}~{\em 20\/}(3), 393--403.

\bibitem[\protect\citeauthoryear{Dybvig, Farnsworth, and Carpenter}{Dybvig
  et~al.}{2010}]{DybvigFarnsworthCarpenter2010}
Dybvig, P.~H., H.~K. Farnsworth, and J.~N. Carpenter (2010).
\newblock {Portfolio Performance and Agency}.
\newblock {\em Review of Financial Studies\/}~{\em 23\/}(1), 1--23.

\bibitem[\protect\citeauthoryear{Easley, L\'{o}pez De~Prado, and O'Hara}{Easley
  et~al.}{2011}]{EasleyPradoOHara2011}
Easley, D., M.~M. L\'{o}pez De~Prado, and M.~O'Hara (2011).
\newblock {The Microstructure of the ''Flash Crash'': Flow Toxicity, Liquidity
  Crashes, and the Probability of Informed Tradding}.
\newblock {\em Journal of Portfolio Management\/}~{\em 37\/}(2), 118--128.

\bibitem[\protect\citeauthoryear{Elton, Gruber, and Blake}{Elton
  et~al.}{1996}]{EltonGruberBlake1996}
Elton, E.~J., M.~J. Gruber, and C.~Blake (1996).
\newblock {The Persistence of Risk-adjusted Mutual Fund Performance}.
\newblock {\em Journal of Business\/}~{\em 69\/}(2), 133--157.

\bibitem[\protect\citeauthoryear{Fama and French}{Fama and
  French}{1993}]{FamaFrench1993}
Fama, E. and K.~French (1993).
\newblock {Common Risk Factors in the Return on Bonds and Stocks}.
\newblock {\em Journal of Financial Economics\/}~{\em 33\/}(1), 3--53.

\bibitem[\protect\citeauthoryear{Ferson and Lin}{Ferson and
  Lin}{2010}]{FersonLin2010}
Ferson, W.~E. and J.~Lin (2010).
\newblock {Alpha and Performance Measurement: The Effect of Investor
  Heterogeneity}.
\newblock {\em SSRN eLibrary\/}.
\newblock Available at
  \href{http://ssrn.com/paper=1782821}{http://ssrn.com/paper=1782821}.

\bibitem[\protect\citeauthoryear{Gikhman and Skorokhod}{Gikhman and
  Skorokhod}{1969}]{GikhmanSkorokhod1969}
Gikhman, I.~I. and A.~V. Skorokhod (1969).
\newblock {\em Introduction to The Theory of Random Processes}.
\newblock Phildelphia, PA: W. B. Saunders, Co.
\newblock Dover reprint 1996.

\bibitem[\protect\citeauthoryear{Glosten and Jagannathan}{Glosten and
  Jagannathan}{1994}]{GlostenJagannathan1994}
Glosten, L.~R. and R.~Jagannathan (1994).
\newblock {A Contingent Claim Approach To Perrformance Evaluation}.
\newblock {\em Journal of Empirical Finance\/}~{\em 1}, 133--160.

\bibitem[\protect\citeauthoryear{Goyal}{Goyal}{2012}]{Goyal2012}
Goyal, A. (2012).
\newblock {Empirical cross-sectional asset pricing: a survey}.
\newblock {\em Financial Markets and Portfolio Management\/}~{\em 26}, 3--38.

\bibitem[\protect\citeauthoryear{Grinold and Kahn}{Grinold and
  Kahn}{2000}]{GrinoldKahn2000}
Grinold, R.~C. and R.~N. Kahn (2000).
\newblock {\em Active Portfolio Management: A Quantitative Approach for
  Providing Superior Returns and Controlling Risk\/} (2nd ed.).
\newblock New York: McGraw-Hill, Inc.

\bibitem[\protect\citeauthoryear{Henricksson}{Henricksson}{1984}]{Henricksson1%
984}
Henricksson, R.~D. (1984, Jan.).
\newblock {Market Timing and Mutual Fund Performance: An Empirical
  Investigation}.
\newblock {\em Journal of Business\/}~{\em 57\/}(1), 73--96.

\bibitem[\protect\citeauthoryear{Henriksson and Merton}{Henriksson and
  Merton}{1981}]{HenrikMerton1981}
Henriksson, R.~D. and R.~C. Merton (1981, Oct).
\newblock {On Market Timing and Investment Performance {II}: Statistical
  Procedures for Evaluating Forecasting Skills}.
\newblock {\em Journal of Business\/}~{\em 54\/}(4), 513--533.

\bibitem[\protect\citeauthoryear{Hull}{Hull}{2006}]{Hull2006}
Hull, J. (2006).
\newblock {\em Options, Futures, and Other Derivatives\/} (6th ed.).
\newblock Upper Saddle River, NJ: Prentice Hall.

\bibitem[\protect\citeauthoryear{Jarrow and Protter}{Jarrow and
  Protter}{2010}]{JarrowProtter2010}
Jarrow, R. and P.~Protter (2010, April).
\newblock {Positive Alphas, Abnormal Performance, and Illusionary Arbitrage}.
\newblock Johnson School Research Paper \#19-2010, Dept. Finance, Cornell Univ.
  Available at
  \href{http://ssrn.com/abstract=1593051}{http://ssrn.com/abstract=1593051}.
  Forthcoming, \emph{Mathematical Finance}.

\bibitem[\protect\citeauthoryear{Jarrow}{Jarrow}{2010}]{Jarrow2010}
Jarrow, R.~A. (2010, Summer).
\newblock {Active Portfolio Manageement and Positiive Alphas: Fact or fantasy?}
\newblock {\em Journal of Portfolio Management\/}~{\em 36\/}(4), 17--22.

\bibitem[\protect\citeauthoryear{Jensen}{Jensen}{1967}]{Jensen1967}
Jensen, M.~C. (1967).
\newblock {The Peformance of Mutual Funds In The Period 1945-1964}.
\newblock {\em Journal of Finance\/}~{\em 23\/}(2), 389--416.

\bibitem[\protect\citeauthoryear{Karatzas and Shreve}{Karatzas and
  Shreve}{1991}]{KaratzasShreve1991}
Karatzas, I. and S.~E. Shreve (1991).
\newblock {\em Brownian Motion and Stochastic Calculus\/} (2nd ed.).
\newblock Graduate Text in Mathematics. New York, N. Y.: Springer-Verlag.

\bibitem[\protect\citeauthoryear{Karatzas and Sudderth}{Karatzas and
  Sudderth}{1999}]{KaratzasSudderth1999}
Karatzas, I. and W.~D. Sudderth (1999).
\newblock {Control and Stopping Of A Diffusion Process On An Interval}.
\newblock {\em Annals of Applied Probability\/}~{\em 9\/}(1), 188--196.

\bibitem[\protect\citeauthoryear{Karlin and Taylor}{Karlin and
  Taylor}{1981}]{KarlinTaylor1981}
Karlin, S. and H.~M. Taylor (1981).
\newblock {\em {A Second Course in Stochastic Processes}}.
\newblock New York, NY: Academic Press, Inc.

\bibitem[\protect\citeauthoryear{Merton}{Merton}{1981}]{Merton1981}
Merton, R. (1981, July).
\newblock {On Market Timing and Investment Performance {I}: An Equilibrium
  Theory of Value for Market Forecasts}.
\newblock {\em Journal of Business\/}~{\em 54\/}(3), 363--406.

\bibitem[\protect\citeauthoryear{Merton}{Merton}{1973}]{Merton1973}
Merton, R.~C. (1973, Sept.).
\newblock {An Intertemporal Capital Asset Pricing Model}.
\newblock {\em Econometrica\/}~{\em 41\/}(5), 867--888.

\bibitem[\protect\citeauthoryear{Musiela and Rutkowski}{Musiela and
  Rutkowski}{2005}]{MusielaRutkowski2005}
Musiela, M. and M.~Rutkowski (2005).
\newblock {\em Martingale Methods in Financial Modelling\/} (2nd ed.).
\newblock Number~36 in Stochastic Modelling and Applied Probability. New York:
  Springer-Verlag.

\bibitem[\protect\citeauthoryear{Myneni}{Myneni}{1992}]{Myneni1992}
Myneni, R. (1992).
\newblock {The Pricing of The American Option}.
\newblock {\em Annals of Applied Probability\/}~{\em 2\/}(1), 1--23.

\bibitem[\protect\citeauthoryear{Noehel, Wang, and Zheng}{Noehel
  et~al.}{2010}]{Nohel2010}
Noehel, T., Z.~J. Wang, and J.~Zheng (2010).
\newblock {Side-by-Side Management of Hedge Funds and Mutual Funds}.
\newblock {\em Review of Financial Studies\/}~{\em 23\/}(6), 2342--2373.

\bibitem[\protect\citeauthoryear{{\O}ksendal}{{\O}ksendal}{2003}]{Oksendal2003}
{\O}ksendal, B. (2003).
\newblock {\em Stochastic Differential Equations: An IntroductionWith
  Applications\/} (6th ed.).
\newblock Universitext. New York: Springer-Verlag.

\bibitem[\protect\citeauthoryear{Ross}{Ross}{1976}]{Ross1976}
Ross, S.~A. (1976).
\newblock {The Arbitrage Theory of Capitaal Asset Pricing}.
\newblock {\em Journaal of Economic Theory\/}~{\em 13}, 341--360.

\bibitem[\protect\citeauthoryear{Sharpe}{Sharpe}{1964}]{Sharpe1964}
Sharpe, W.~F. (1964, Sept.).
\newblock {Capital Asset Prices; A Theory of Market Equilibrium Under
  Conditions of Risk}.
\newblock {\em Journal of Finance\/}~{\em 21\/}(3), 425--442.

\bibitem[\protect\citeauthoryear{Shleifer and Vishny}{Shleifer and
  Vishny}{1997}]{ShleiferVishny1997}
Shleifer, A. and R.~W. Vishny (1997, March).
\newblock {The Limits of Arbitrage}.
\newblock {\em Journal of Finance\/}~{\em 52\/}(1), 35--55.

\bibitem[\protect\citeauthoryear{Shorack and Wellner}{Shorack and
  Wellner}{1986}]{ShorackWellner1986}
Shorack, G.~R. and J.~A. Wellner (1986).
\newblock {\em Empirical Processes With Applications To Statistics}.
\newblock New York, N.Y.: John Wiley \& Sons, Inc.

\bibitem[\protect\citeauthoryear{Treynor and Mazuy}{Treynor and
  Mazuy}{1966}]{TreynorMazuy1966}
Treynor, J. and K.~Mazuy (1966).
\newblock {Can Mutual Funds Outguess The Market?}
\newblock {\em Harvard Business Review\/}~{\em 44}, 131--136.

\bibitem[\protect\citeauthoryear{Vasicek}{Vasicek}{1977}]{Vasicek1977}
Vasicek, O. (1977).
\newblock {An Equilibrium Characterization of The Term Structure}.
\newblock {\em Journal of Financial Economics\/}~{\em 5}, 177--188.

\end{thebibliography}

\end{document}